\documentclass[aps,pre,twocolumn,showpacs,10pt]{revtex4-2}
\usepackage{graphicx}
\usepackage{amsmath,amssymb}
\usepackage{amsfonts}
\usepackage[dvipsnames]{xcolor}
\usepackage{enumitem}
\usepackage{amsthm}
\newtheorem{defn}{Definition} \newtheorem{prop}[defn]{Proposition} 
\newtheorem{cor}{Corollary}
\begin{document}

\title{Incidence-based random walks on simplicial complexes} 

\author{C.T. Martínez-Martínez}
\email[]{claudiam@fisica.unam.mx}
\affiliation{Instituto de F\'isica, Universidad Nacional Aut\'onoma de M\'exico, Ciudad de M\'exico, C.P. 04510, Mexico}

\author{Francisco J. Sevilla}
\email[]{fjsevilla@fisica.unam.mx}
\affiliation{Instituto de F\'isica, Universidad Nacional Aut\'onoma de M\'exico, Ciudad de M\'exico, C.P. 04510, Mexico}

\date{\today}

\begin{abstract}
We introduce an incidence-based random walk on the edges of a random two-dimensional simplicial complex with a complete $1$-skeleton and independently retained triangular faces. The dynamics combine two transport channels, one mediated by vertices and the other by triangular faces, through an effective transition operator controlled by a mixing parameter $q$. This construction isolates the effects of higher-order connectivity without modifying the underlying pairwise support of the walk. We characterize the model through structural observables, spectral relaxation, stationary localization, and first-passage transport. Our results show that partial face retention generates heterogeneous higher-order connectivity, giving rise to a pronounced transport bottleneck at intermediate face densities. In this regime, the second-largest eigenvalue modulus, the inverse participation ratio of the stationary distribution, and the mean first-passage time all exhibit non-monotonic behavior, reaching their largest values at intermediate face densities. The corresponding first-passage-time distributions reveal an enhanced probability of unusually long trajectories. Together, these results establish a simple framework for investigating how heterogeneous higher-order connectivity reshapes spectral and transport properties beyond pairwise network dynamics.
\end{abstract}

\maketitle

\section{Introduction}
The study of stochastic processes on topologically nontrivial spaces has attracted increasing attention in statistical physics, motivated by applications ranging from diffusion on complex networks and higher-order structures to relaxation in topological systems and transport in constrained geometries \cite{Masuda2017,RiascosJCompNetworks2021,SawadaPhysRevLett2024}. Among these processes, \emph{discrete-time random walks} (DTRW), on either regular or complex networks, has established a firm theoretical framework to model the stochastic journey of a system (the walker), over its accessible states (the network nodes), transiting among them via connected pairs of the system states (network edges between nodes). Recently, higher-order structures have emerged as a necessary ingredient for describing the dynamics of complex systems beyond pairwise interactions \cite{BensonScience2016,LambiotteNatPhys2019,BattistonNatPhys2021,MajhiJRoySocInterface2022,MillanNatPhys2025,WangPhyRep2024,IacopiniNature2019}. Their incorporation into stochastic dynamics is therefore of growing interest. A natural extension is to replace edges with hyperedges, which can connect more than two nodes, thereby generalizing the pairwise transition rates of random walks (RWs) on graphs to stochastic dynamics on hypergraphs \cite{CarlettiPhysRevE2020,TraversaPhysRevE2023,BensonSIAM2017,HayashiBook2020}. 

A related framework for incorporating higher-order structures into stochastic dynamics is provided by \emph{simplicial complexes}. From a modeling perspective, these provide the ideal framework for moving beyond graph-based diffusion by incorporating higher-order effects (described in terms of the elemental structures, as the simplices of simplicial complexes: edges, faces, and higher-dimensional simplices) and making topological constraints explicit \cite{BianconiBook2021}. Thus, random walks on simplicial complexes allow a powerful generalization of the stochastic journey on nodes (representing simple system states), to the case for which the walk occurs on  structured elements of the system represented by simplices of order $k$, and the transitions among these elements are induced by the incident simplices of order $k-1$ and $k+1$. 
In this paper, we consider a random walk on a set of edges (order-1 simplices), where transitions between neighboring edges are defined by the incident vertices (order-0 simplices) and triangles (order-2 simplices) of the edge under consideration. 

The main goal of this paper is to study the transport properties of DTRWs on finite spaces described by simplicial complexes 
whose dynamics explicitly incorporate the dimensionality of the simplices that compose such spaces. In this sense, we consider a higher-order random walk (HORW).  
In short, the transition between neighboring simplices of dimension $k$ is implemented through simplices of one dimension smaller, $k-1$, or one dimension larger $k+1$. We interpret its up-and-down components as mechanisms for transport across different dimensions \cite{FebbeArXiv2026a,FebbeArXiv2026b}. 
In contrast to these approaches, in which the walker's 
state itself resides on simplices of varying dimension (used, respectively, to rank simplices across orders and to optimize node-to-node navigation with teleportation), the dynamics considered here constrains 
the random walk to occur on the edge space $X_1$: the vertex and triangle channels act only as intermediate steps of a two-step composition, so that the observable of interest is transport within a fixed simplicial order, modulated by the incidence of higher-dimensional structure. Moreover, we consider a random ensemble of simplicial complexes with partially retained faces, rather than a fixed complex, which allows us to isolate the role of quenched structural 
heterogeneity, absent by construction in a deterministic complex, as the origin of the transport bottleneck reported below.

This paper is organized as follows: In Sect.~\ref{sec:prelim} we introduce the mathematical framework and define the effective edge-walk dynamics in terms of lower and upper transition channels. We then present the random face-diluted ensemble $X(n,p)$, which constitutes the main object of study, and discuss its two natural limiting regimes. After that, we examine the complete case $p=1$ as an exactly solvable benchmark. We next analyze the random ensemble through structural observables, spectral relaxation, stationary-state localization, and mean first-passage times. Finally in Sect.\ref{conclusions}, we summarize the main results and discuss possible extensions of the present framework.

\section{Mathematical framework}
\label{sec:prelim}

In this section we introduce the concepts and fix the notation needed to define a DTRW on the state space of $1$-simplices (edges) of a finite simplicial complex, which is the main subject matter of the paper. In this setting, the edges of a given set of vertices constitute the relevant state space, where transport can be regarded as a stochastic exploration of the space of interactions. This perspective becomes particularly important in systems in which interactions are not independent but are organized into higher-order structures. In such a case, the walker explores a space of pairwise interactions whose connectivity is constrained by the surrounding simplicial structure. The presence of triangles and higher-dimensional simplices can consequently modify the pathways available to the walker, giving rise to transport properties that cannot be inferred solely from the underlying network of vertices.

Throughout this paper, we consider simplicial complexes of dimension at most two, i.e., composed of vertices, edges and triangles. 

\subsection{Simplicial complexes}
\label{subsec:sc}

Let $\mathcal{V}$ be a finite set whose elements are called vertices. A $k$-simplex is a subset $\sigma\subset \mathcal{V}$ of size
$|\sigma|=k+1$ and a simplicial complex $X$, is a collection of simplices $\sigma$ such that if $\sigma\in X$
and $\tau\subset \sigma$, then $\tau\in X$.
We will focus our analysis on the case  $k=0,1,2$, in setting we write $X_k$ for the set of $k$-simplices in $X$, and denote $n_k=|X_k|$ the number of these in $X$ (see \cite{Schaub2020}).

A (simple) graph $(\mathcal{V},\mathcal{E})$ can be identified with a 1-dimensional simplicial complex $X_0=\mathcal{V}$ (vertices), $X_1=\mathcal{E}$ (edges), and $X_2=\emptyset$ (triangles).
In a general 2-dimensional simplicial complex, $X_2$ contains $2$-simplices, which
encode higher-order relations among triplets of vertices. 

We proceed now to specify the concept of \emph{incidence}, which specify the relation between the elements in a simplicial complex and allows the definition of a DTRW on it. Incidence is understood here in the standard way: a vertex $v\in X_0$ is incident to an edge $e\in X_1$
if $v\in e$, and an edge $e\in X_1$ is incident to a triangle $f\in X_2$ if $e\subset f$. 

\subsection{Unsigned incidence structure}
Let $\mathbf B_1\in\{0,1\}^{n_0\times n_1}$, $\mathbf B_2\in\{0,1\}^{n_1\times n_2}$ denote the unsigned vertex--edge and edge--triangle incidence matrices, respectively. These are given explicitly by
\begin{subequations}
\begin{align}
\bigl(\mathbf B_1\bigr)_{ve}&=
\begin{cases}
1, & v\subset e,\\
0, & \text{otherwise,}
\end{cases}\\
\bigl(\mathbf B_2\bigr)_{ef}&=
\begin{cases}
1, & e\subset f,\\
0, & \text{otherwise.}
\end{cases}
\end{align}    
\end{subequations}

\begin{subequations}
We define the degree of the vertex $v$ as
\begin{equation}
d_V(v)=\sum_{e\in X_1} (\mathbf B_1)_{ve},
\end{equation}
which gives the number of edges incident to vertex $v$,
and the triangle degree of an edge $e$ as
\begin{equation}
d_T(e)=\sum_{f\in X_2} (\mathbf B_2)_{ef},
\end{equation}
that is, the number of triangles containing $e$.
\end{subequations}

The vertex and triangle degrees allow to introduce the diagonal matrices \begin{subequations}
\begin{equation}
D_V=\mathrm{diag}\big(d_V(v)\big)_{v\in X_0},
\end{equation}
\begin{equation}
D_T=\mathrm{diag}\big(d_T(e)\big)_{e\in X_1}.
\end{equation}
\end{subequations}
For the specific cases considered here 
every vertex belongs to at least one edge, thus $D_V^{-1}$ is well defined. 
However, the number of incident triangles to some edges may be zero, thus $d_T(e)=0$ for some $e\in X_1$. This case is resolved by the introduction of the diagonal matrix
\begin{equation}
\widetilde D_T^{-1}
=
\mathrm{diag}\!\left(\widetilde d_T(e)^{-1}\right)_{e\in X_1},
\end{equation}
with
\begin{equation}
\widetilde d_T(e)^{-1}
=
\begin{cases}
\dfrac{1}{d_T(e)}, & d_T(e)>0,\\[2mm]
0, & d_T(e)=0.
\end{cases}
\end{equation}
We also introduce the \emph{availability matrix}
\begin{equation}
A=\mathrm{diag}(a_e)_{e\in X_1},
\qquad
a_e=
\begin{cases}
1, & d_T(e)>0,\\
0, & d_T(e)=0
\end{cases},
\end{equation}
which identifies the edges that belong to at least one triangle. This matrix will be used below to encode the availability of the upper transition channel in the effective dynamics.

\section{The Model}
\label{sec:model}
\subsection{Edge-state dynamics}
We consider a discrete-time Markov chain whose state space corresponds to the set of 1-simplices (edges), $X_1$, of a finite simplicial complex $X$ of dimension at most two. Thus, if $n_1=|X_1|$, the state of the walker at time $t$ is given by the column probability vector $p(t)\in \mathbb{R}^{n_1}$ whose entries $p_e(t)\ge 0$ indicate the probability that the walker occupies the edge $e$ and satisfies normalization, i.e.,  $\sum_{e\in X_1} p_e(t)=1$.

The evolution of $p(t)$ is dictated by the Markov chain master equation, which in the column-vector convention is written as
\begin{equation}
\label{RW-MasterEq}
p(t+1)=P_{\mathrm{eff}}\,p(t),
\end{equation}
where $P_{\mathrm{eff}}$ is the matrix that specifies the transition between edges in $X_1$.

\subsection{Two-step transition mechanism}
The core approach in our analysis is that the transition between two given edges is not prescribed directly but rather is built in two steps. The walker moves from one edge to another edge through an intermediate simplex of adjacent dimension. Starting from an edge $e\in X_1$, the walker may first select either one of its incident vertices or one of its incident triangles, and then choose an edge incident to that intermediate simplex. This defines two elementary transport channels: a lower channel mediated by vertices and an upper channel mediated by triangles.

Throughout the paper, we use the convention
\begin{equation*}
 P_{ij}:C_i\to C_j,
\qquad i,j\in\{0,1,2\},
\end{equation*}
to denote the transition operator from the chain space $C_i$ (the space of all possible configurations of a physical quantity distributed over the simplices of dimension $i$), to the chain $C_j$. Thus, the first index indicates the origin space and the second index the target space. In the case considered here, $C_0$, $C_1$, and $C_2$ denote the chain spaces associated with vertices, edges, and triangles, respectively. 
Since probability states are represented by column vectors, operator compositions act on a given vector from right to left. Therefore, the transition probability between edges through vertices
(transition by lowering) is represented by
$P^\downarrow=P_{01}P_{10}$,
corresponding to the sequence
$C_1 \xrightarrow{P_{10}} C_0 \xrightarrow{P_{01}} C_1$,
whereas the transition between edges mediated by triangles
(transition by rising) is 
$P^\uparrow=P_{21}P_{12}$,
corresponding to
$C_1 \xrightarrow{P_{12}} C_2 \xrightarrow{P_{21}} C_1$.

\subsection{Elementary transition operators}

We now define the elementary operators involved in the two-step composition of the transition dynamics.

The transition from edges to vertices is implemented by uniformly choosing any of the two endpoints of the given edge, we make the identification
\begin{equation}
P_{10}=\frac12\,\mathbf B_1,
\end{equation}
or equivalently,
\begin{equation}
(P_{10})_{ve}
=
\begin{cases}
\dfrac12, & v\subset e,\\
0, & \text{otherwise.}
\end{cases}
\end{equation}
The reverse transition, i.e., from vertices to edges, is uniform among all edges incident to a given vertex, i.e.
\begin{equation}
P_{01}=\mathbf B_1^\top D_V^{-1},
\end{equation}
that is,
\begin{equation}
(P_{01})_{ev}
=
\begin{cases}
\dfrac{1}{d_V(v)}, & v\subset e,\\
0, & \text{otherwise.}
\end{cases}
\end{equation}
Similarly, the transition from edges to triangles is uniform among all triangles incident to a given edge:
\begin{equation}
P_{12}=\mathbf B_2^\top \widetilde D_T^{-1},
\end{equation}
so that
\begin{equation}
(P_{12})_{fe}
=
\begin{cases}
\dfrac{1}{d_T(e)}, & e\subset f,\ \ d_T(e)>0,\\
0, & \text{otherwise.}
\end{cases}
\end{equation}

Finally, the transition from triangles to edges is uniform among the three boundary edges of a triangle:
\begin{equation}
P_{21}=\frac13\,\mathbf B_2,
\end{equation}
namely,
\begin{equation}
(P_{21})_{ef}
=
\begin{cases}
\dfrac13, & e\subset f,\\
0, & \text{otherwise.}
\end{cases}
\end{equation}

\subsection{Lower and upper edge walks}

By composing the elementary operators, we obtain two edge-to-edge transition kernels. The transition by lowering to vertices is defined by
\begin{equation}
P^\downarrow
=
P_{01}P_{10}
=
\frac12\,\mathbf B_1^\top D_V^{-1} \mathbf B_1,
\end{equation}
whose entries can be expressed as
\begin{equation*}
P^\downarrow_{e'e}
=
\frac12\sum_{v\in X_0}
\frac{\mathbf 1_{\{v\subset e\}}\mathbf 1_{\{v\subset e'\}}}{d_V(v)},
\end{equation*}
where $\mathbf 1_{\{v\subset e\}}$ equals 1 if $v\subset e$ and 0 otherwise (and similarly for 
$\mathbf 1_{\{v\subset e'\}}$). 
Thus, $P^\downarrow_{e'e}>0$ precisely when the edges $e$ and $e'$ share a vertex, with the diagonal term corresponding to the possibility of returning to the same edge after visiting one of its endpoints.

The upper walk, by going through triangles, is defined by
\begin{equation}
P^\uparrow
=
P_{21}P_{12}
=
\frac13\,\mathbf B_2\,\mathbf B_2^\top \widetilde D_T^{-1}.
\end{equation}
Its entries are
\begin{equation*}
P^\uparrow_{e'e}
=
\frac13\sum_{f\in X_2}
\frac{\mathbf 1_{\{e\subset f\}}\mathbf 1_{\{e'\subset f\}}}{d_T(e)},
\end{equation*}
with the convention that the whole column is zero when $d_T(e)=0$. Hence $P^\uparrow_{e'e}>0$ precisely when $e$ and $e'$ belong to a common triangle.

Notice that, in a $2$-dimensional simplicial complex, two edges belonging to a common triangle necessarily share a vertex. Therefore, the upper channel does not create new edge adjacencies beyond those already present in the line graph of the $1$-skeleton; rather, it modifies the transition weights by incorporating higher-order incidence information.

\begin{figure}
\includegraphics[width=\columnwidth, trim=20 150 20 150, clip=true]{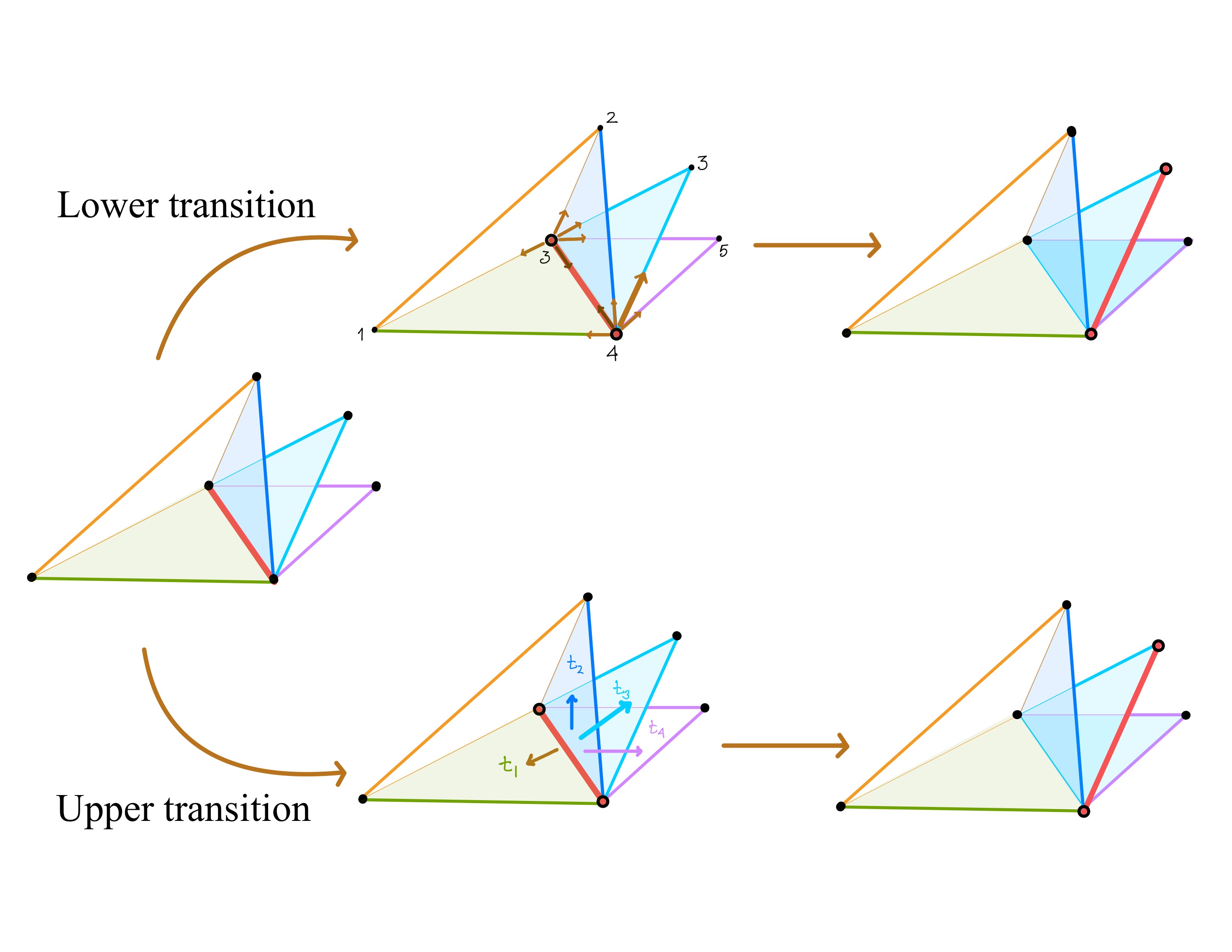}
\caption{Schematic representation of the two elementary transport channels defining the edge walk. Lower transition: the walker moves from an edge (red) to one of its two incident vertices $\{3,4\}$, in the example the walker chose vertex 4, then chooses one edge incident to the chosen vertex, edge $(4,3)$ in the example. Upper transition: the walker moves from an edge (red) to one of its incident triangles and then to another boundary edge of that triangle.}
\label{fig:placeholder}
\end{figure}

\subsection{The effective transition operator}

We combine both channels into a single effective transition matrix. Let $q\in[0,1]$ denote the weight assigned to the lower channel whenever the upper channel is available. We define
\begin{equation}
\label{eq:Peff}
P_{\mathrm{eff}}(q)
=
P^\downarrow\!\big[I-(1-q)A\big]
+
(1-q)\,P^\uparrow A .
\end{equation}
Here, $A$ acts as a mask that restricts the upper channel to edges with $d_T(e)>0$. By construction, $P_{\mathrm{eff}}(q)$ is column-stochastic for every $q\in[0,1]$ (a short proof is given in Appendix~A). Equivalently, column by column,
\[
{P_{\mathrm{eff}}}_{(\cdot,e)}=
\begin{cases}
q\,P^\downarrow_{(\cdot,e)}+(1-q)\,P^\uparrow_{(\cdot,e)},
& d_T(e)>0,\\[2mm]
P^\downarrow_{(\cdot,e)},
& d_T(e)=0.
\end{cases}
\]

This definition has two useful consequences. First, it remains well defined even when some edges do not belong to any triangle. Second, in the purely graph regime $X_2=\varnothing$, one has $A=0$ and therefore
\begin{equation*}
P_{\mathrm{eff}}(q)=P^\downarrow,
\end{equation*}
independently of $q$. In particular, for the face-diluted complexes studied below, the limit $p=0$ reduces exactly to the graph-based edge walk.

Notice that the effective walk always has self-loops. Indeed,
\begin{equation*}
P^\downarrow_{ee}
=
\frac12\sum_{v\subset e}\frac{1}{d_V(v)}>0
\qquad \forall e\in X_1.
\end{equation*}
Therefore,
\begin{equation*}
{P_{\mathrm{eff}}}_{(e,e)}>0
\qquad \forall e\in X_1.
\end{equation*}
As a consequence, once irreducibility holds, the chain is automatically aperiodic.

\section{Random face-diluted complete complexes}
\label{sec:random-ensemble}

Having defined the general edge-walk dynamics, we now turn to explore the random face-diluted ensemble $X(n,p)$, that is, the case for which only a fraction $p$ of the total number of possible triangles in the simplicial complex is present. This constitutes the main object of study in this work. Such a construction interpolates, in a controlled way, between the purely graph-based (no triangles) and the fully simplicial regimes (all possible triangles are present) by randomly diluting the set of triangular faces while keeping the complete $1$-skeleton, in analogy with the classical random-graph ensemble \cite{ErdosRenyi1959}. The fraction of triangles present in the simplicial complex will characterize the transport properties of the RWs considered.   

We fix a set of $n$ vertices,
\begin{equation*}
X_0=\{1,\dots,n\},
\end{equation*}
and keep the complete graph $K_n$ as the $1$-skeleton. Thus, all possible edges are always present, so that
\begin{equation*}
X_1=\big\{\{i,j\}:1\le i<j\le n\big\},
\qquad
|X_1|=\binom{n}{2}.
\end{equation*}
Randomness is introduced only at the level of faces: each possible triangle
\begin{equation*}
\{i,j,k\},
\qquad
1\le i<j<k\le n,
\end{equation*}
is independently included in the simplicial complex with probability $p\in[0,1]$, in the spirit of the Linial--Meshulam random 2-complex model \cite{LinialMeshulam2006} (see \cite{Kahle2014Survey} for a survey). Therefore,
\begin{equation*}
|X_2|\le \binom{n}{3},
\end{equation*}
and the parameter $p$ controls the density of $2$-simplices in the complex.

This ensemble has two natural limiting cases. When $p=0$, all faces are removed and the complex reduces to the complete graph $K_n$, so that the effective walk becomes purely lower:
\begin{equation*}
P_{\mathrm{eff}}(q)=P^\downarrow.
\end{equation*}
At the opposite extreme, when $p=1$, all triangles are present and one recovers the complete $2$-dimensional simplicial complex, namely the complete $2$-skeleton of the $(n-1)$-simplex. Thus, by varying $p$ on can interpolate between a purely graph-based edge walk and a higher-order edge walk influenced by triangular incidences.

For a given realization of the ensemble, the upper degree of an edge,
\begin{equation*}
d_T(e)=|\{f\in X_2:\ e\subset f\}|,
\end{equation*}
becomes a random quantity. Consequently, the set of active edges 
\begin{equation*}
X_1^+=\{e\in X_1:\ d_T(e)>0\},
\end{equation*}
as well as the operators $P^\uparrow$ and $P_{\mathrm{eff}}$, depend on the particular realization of $X(n,p)$. By contrast, the lower walk $P^\downarrow$ is deterministic in this ensemble, since it depends only on the complete $1$-skeleton. In this sense, random face removal does not modify the graph support of the walk, but it does alter the higher-order contribution to the transition probabilities through the random upper degrees.

This ensemble provides a controlled framework to isolate the effect of higher-order structure on transport. Since the underlying graph remains fixed, all changes in the effective dynamics are produced solely by the random availability of triangular faces and by the competition between the lower and upper channels.

In the numerical analysis presented below, independent realizations of $X(n,p)$ are generated for each fixed value of $p$, the corresponding effective transition matrix $P_{\mathrm{eff}}$ is constructed, and the observables of interest are computed. Unless otherwise stated, all reported quantities are averaged over independent realizations at fixed $(n,p,q)$.

\section{The complete case ($p=1$) as an exactly solvable benchmark}
As stated in the previous section, the random face-diluted ensemble 
attains two natural limiting cases: the graph-based regime $p=0$ and the fully simplicial regime $p=1$. Here we focus on the latter. When $p=1$, all triangular faces are present and the complex becomes the complete $2$-skeleton of the $(n-1)$-simplex. Because of its maximal symmetry, this limit can be analyzed exactly and provides a natural benchmark for the higher-order random walk introduced above.

We therefore consider the simplicial complex formed by $n$ vertices, all $\binom{n}{2}$ possible edges, and all $\binom{n}{3}$ possible triangles. Explicitly,
\begin{equation*}
X_0=\{1,\dots,n\}, \qquad |X_0|=n,
\end{equation*}
\begin{equation*} X_1=\big\{\{i,j\}:1\le i<j\le n\big\}, \qquad |X_1|=\binom{n}{2}, \end{equation*} and \begin{equation*} X_2=\big\{\{i,j,k\}:1\le i<j<k\le n\big\}, \qquad |X_2|=\binom{n}{3}. \end{equation*}
Its $1$-skeleton is the complete graph $K_n$. In particular, every vertex has degree $d_V(v)=n-1$, every edge belongs to exactly $d_T(e)=n-2$ triangles, and therefore every edge is active. Hence $A=I$, so that the effective transition matrix reduces to
\begin{equation}
P_{\mathrm{eff}}(q)=qP^\downarrow+(1-q)P^\uparrow.
\end{equation}

For the lower walk, the general expression
\begin{equation*}
P^\downarrow_{e'e}
=
\frac12\sum_{v\in X_0}
\frac{\mathbf 1_{\{v\subset e\}}\mathbf 1_{\{v\subset e'\}}}{d_V(v)}
\end{equation*}
gives
\begin{equation}
P^\downarrow_{e'e}
=
\begin{cases}
\dfrac{1}{n-1}, & e'=e,\\[2mm]
\dfrac{1}{2(n-1)}, & e'\neq e \text{ and } e,e' \text{ share a vertex},\\[2mm]
0, & \text{otherwise}.
\end{cases}
\end{equation}
Indeed, if $e'=e$, the sum receives the contribution $1/[2(n-1)]$ from each endpoint of $e$, whereas if $e\neq e'$ and the two edges share a vertex, there is exactly one nonzero contribution of the same form.

For the upper walk, the corresponding formula
\begin{equation*}
P^\uparrow_{e'e}
=
\frac13\sum_{f\in X_2}
\frac{\mathbf 1_{\{e\subset f\}}\mathbf 1_{\{e'\subset f\}}}{d_T(e)}
\end{equation*}
yields.
For edges sharing a common triangle, write $e \sim e'$; i.e., $e\sim e'$ 
if there exists $f\in X_2$ with $e,e'\subset f$. Then
\begin{equation}
P^\uparrow_{e'e}
=
\begin{cases}
\dfrac{1}{3}, & e'=e,\\[2mm]
\dfrac{1}{3(n-2)}, & e'\neq e,\ e\sim e',\\[2mm]
0, & \text{otherwise}.
\end{cases}
\end{equation}
Here, if $e'=e$, the sum runs over the $n-2$ triangles incident to $e$, while if $e\neq e'$ and the two edges share a vertex, there is exactly one triangle containing both. Thus, in the complete case, $P^\downarrow$ and $P^\uparrow$ have the same support: both connect pairs of edges that share a vertex, but they assign different transition weights.
\subsection{The 2-skeleton of the 5-simplex}
We now consider the complete 2-skeleton of a set of $n=6$ vertices, i.e., the 2-skeleton of the 5-simplex. This example makes the general construction fully explicit and serves as a convenient benchmark for the numerical analysis developed below.

We further fix $q=\frac{1}{2}$ so that both transport channels contribute equally, and the effective transition among edges takes the explicit form
\begin{equation}
P_{\mathrm{eff}}=\frac12\left(P^\downarrow+P^\uparrow\right),
\end{equation}
that is,
\begin{equation*}
(P_{\mathrm{eff}})_{e'e}
=
\begin{cases}
\dfrac{4}{15}, & e'=e,\\[2mm]
\dfrac{11}{120}, & e'\neq e \text{ and } e,e' \text{ share a vertex},\\[2mm]
0, & \text{otherwise}.
\end{cases}
\end{equation*}
Since each edge of $K_6$ shares a vertex with exactly eight other edges, the column sums to one 
as required. Moreover, by symmetry, $P_{\mathrm{eff}}$ is doubly stochastic, and therefore its stationary distribution is uniform:
\begin{equation*}
\pi_e=\frac{1}{15},\qquad e\in X_1.
\end{equation*}
The spectrum of $P_{\mathrm{eff}}$ can also be obtained exactly. Since the support of the edge walk is the graph whose vertices are the $15$ edges of $K_6$ and whose adjacency is determined by shared endpoints, the relevant support graph is the line graph of $K_6$. Denoting by $A_{\mathrm{L}}$ its adjacency matrix, one may write
\begin{equation}
P_{\mathrm{eff}}=\frac{4}{15}I+\frac{11}{120}A_{\mathrm{L}}.
\end{equation}
where $A_{\mathrm{L}}$ is the adjacency matrix of the line graph. Using the known spectrum of $T(6)$ \cite{brouwer2011spectra,Cvetkovic2004}, the eigenvalues of $P_{\mathrm{eff}}$ are
\begin{equation*}
\lambda_1=1,\qquad
\lambda_2=\frac{9}{20},\qquad
\lambda_3=\frac{1}{12},
\end{equation*}
with multiplicities $1$, $5$, and $9$, respectively. Since the kernel is symmetric, the spectral gap is
\begin{equation*}
\gamma=1-\lambda_2=\frac{11}{20}.
\end{equation*}

This exact spectrum shows that the complete case is strongly mixing. In particular, all non-stationary modes decay at rates controlled by eigenvalues strictly smaller than one, and the leading relaxation scale is governed by $\lambda_2=9/20$. 
The role of the upper channel in this regime is not to enlarge the support of the walk, but rather to redistribute the transition weights uniformly across the edge space. This maximally regular situation serves as the natural baseline for the face-diluted complexes analyzed below.

\section{Numerical results}
\subsection{Structural properties}
Before analyzing the effective walk \eqref{RW-MasterEq}, it is useful to characterize how face dilution affects the higher-order structure of the underlying random simplicial complex. In the ensemble $X(n,p)$, the key structural quantity is the triangle degree of an edge, $d_T(e)$, namely the number of $2$-simplices incident to $e$.

Each edge of the complete graph $K_n$ can belong to exactly $n-2$ distinct triangles. Since each triangle in $X(n,p)$ is retained with probability $p$, the random variable $d_T(e)$ follows a binomial law,
\begin{equation*}
d_T(e)\sim \mathrm{Binomial}(n-2,p).
\end{equation*}
Therefore, its mean and variance are given by
\begin{equation}
\langle d_T\rangle = (n-2)p,
\label{dt}
\end{equation}
and
\begin{equation*}
\mathrm{Var}(d_T)=(n-2)p(1-p),
\end{equation*}
respectively. In particular, the standard deviation reads
\begin{equation}
\sigma(d_T)=\sqrt{(n-2)p(1-p)}.
\label{sdt}
\end{equation}

Another quantity of direct interest is the fraction of inactive edges,
\begin{equation*}
\phi_0=\frac{1}{|X_1|}\sum_{e\in X_1}\mathbf{1}_{{d_T(e)=0}},
\end{equation*}
which measures the proportion of edges for which the upper channel is unavailable. Since $d_T(e)=0$ precisely when none of the $n-2$ possible triangles containing $e$ is present, one has
\begin{equation}
\langle \phi_0\rangle = (1-p)^{n-2}.
\label{fi0}
\end{equation}

These structural observables characterize complementary aspects of the
higher-order connectivity created by face dilution. The mean triangle degree
determines the typical availability of the upper channel, whereas its
standard deviation quantifies the degree of structural heterogeneity across
edges. In contrast, the inactive-edge fraction measures the prevalence of
edges for which the upper channel is completely absent. Together, these
quantities provide a structural description of the simplicial ensemble that
will serve as the basis for interpreting the dynamical behavior discussed in
the following sections.

\begin{figure}
\centering
\includegraphics[trim=0.4cm 13.cm 4cm 0.8cm,clip=true, width=1.1\linewidth]{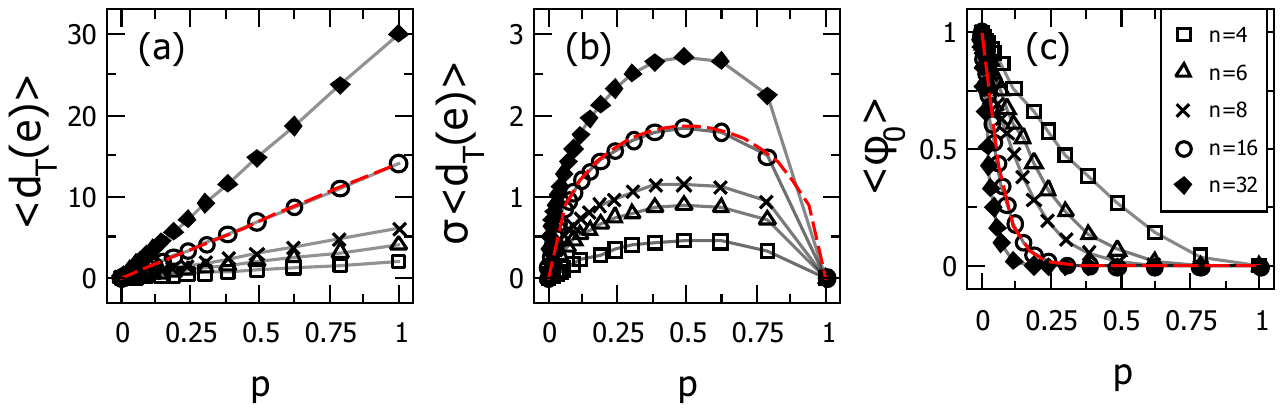}
\caption{Structural properties of the random face-diluted ensemble $X(n,p)$. (a) Mean triangle degree $\langle d_T\rangle$, (b) standard deviation $\sigma(d_T)$, and (c) inactive-edge fraction $\phi_0$, as functions of the face-retention probability $p$. Symbols correspond to different values of $n$, while the dashed red line indicates the theoretical prediction for $n=16$ [Eqs.~(\ref{dt})--(\ref{fi0})].}
\label{Fig1}
\end{figure}

As expected, these structural observables depend only on the random
simplicial ensemble and are therefore independent of the mixing parameter
$q$, which enters exclusively through the definition of
$P_{\mathrm{eff}}$. More importantly, Fig.~\ref{Fig1} shows that face
dilution does not simply reduce the amount of higher-order connectivity.
Instead, it generates an intermediate regime in which the upper channel is
highly heterogeneous: the average triangle degree increases monotonically,
whereas its fluctuations become maximal at intermediate values of $p$ before
decreasing again as the simplicial complex approaches the complete
2-skeleton. As we show below, this structurally heterogeneous regime
produces distinctive spectral and dynamical signatures, including slow
relaxation, localization of the stationary state, and inefficient transport.

\subsection{Spectral properties}
Having characterized the structural heterogeneity introduced by face dilution, we next examine its  consequences on the spectrum through the effective transition operator. Since $P_{\mathrm{eff}}$ is a finite column-stochastic matrix, it always has eigenvalue $1$, corresponding to the stationary state. Given that $P_{\mathrm{eff}}$ is generally non-symmetric for face-diluted realizations, we characterize the slowest relaxation mode through the second-largest eigenvalue modulus,
\begin{equation}
\lambda_*=\max_{\lambda\in\sigma(P_{\mathrm{eff}})\setminus{1}} |\lambda|.
\end{equation}
Values of $\lambda_*$ close to unity indicate slow relaxation towards stationarity, whereas smaller values correspond to faster mixing.

\begin{figure}
\centering
\includegraphics[trim={0.45cm 5.3cm 4cm 0},clip, width=1.1\linewidth]{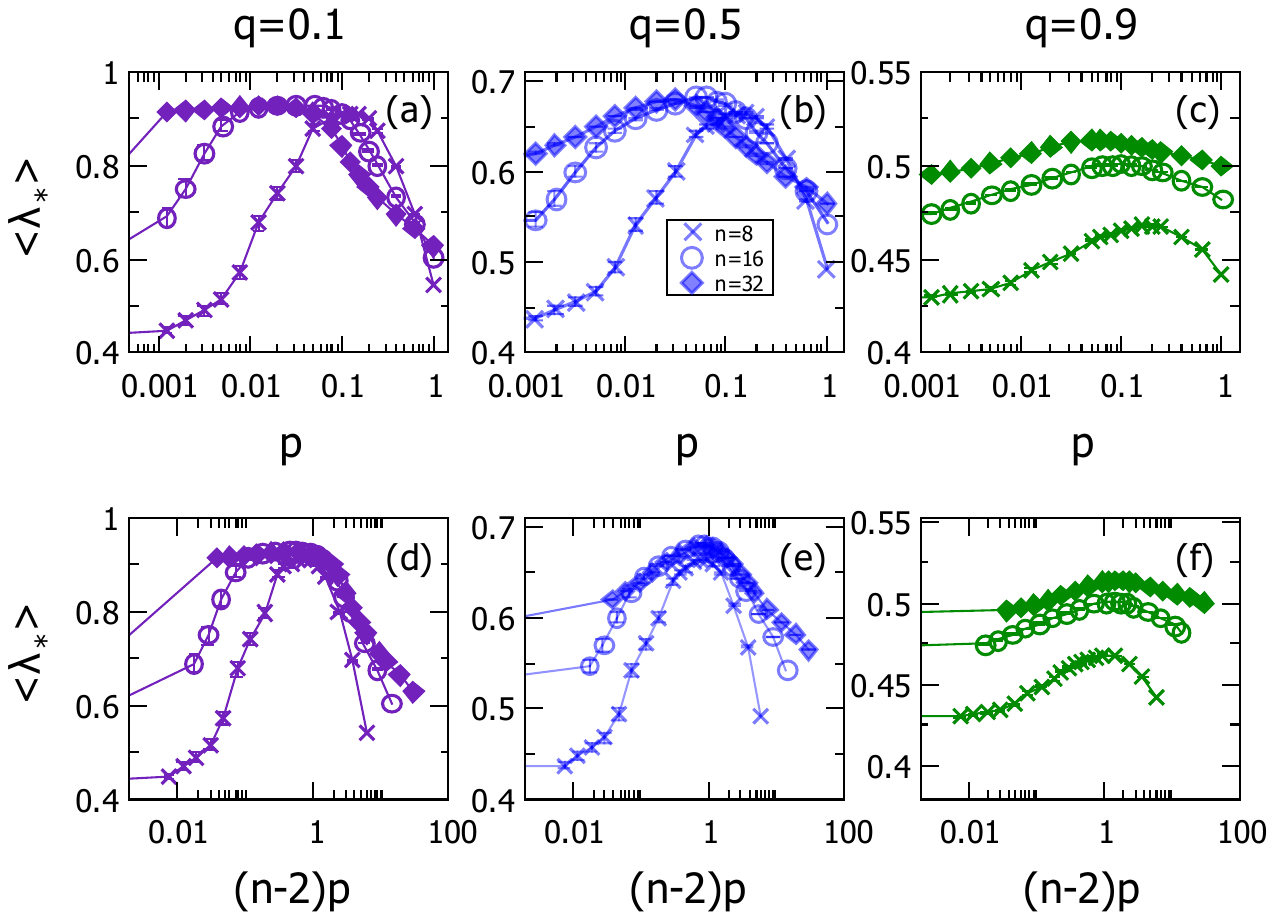}
\caption{
Average second-largest eigenvalue modulus $\langle \lambda_* \rangle$ of the effective transition operator. The top row presents $\langle \lambda_* \rangle$ as a function of the face-retention probability $p$, whereas the bottom row shows the corresponding results as a function of the mean upper degree $\langle d_T\rangle=(n-2)p$. Columns correspond to $q=0.1$, $0.5$, and $0.9$, respectively.  Each symbol represents the average over an ensemble of $180000/n$ independent realizations.
}
\label{Fig3b}
\end{figure}

Figure~\ref{Fig3b} shows the average of the second-largest eigenvalue modulus, $\langle \lambda_* \rangle$, over the ensembles of random 2-skeleton for system sizes $n=8$, $16$, and $32$. The top row displays $\langle \lambda_* \rangle$ as a function of the face-retention probability $p$, while the bottom row presents the same results as a function of the mean upper degree $\langle d_T\rangle=(n-2)p$. The case $q=1$ is not shown because in that limit the dynamics reduces to the purely lower walk and $\langle \lambda_* \rangle$ becomes independent of $p$. For all values of $q<1$, $\langle \lambda_* \rangle$ exhibits a pronounced non-monotonic dependence, reaching a maximum at intermediate values of the control parameter. Thus, the slowest relaxation does not occur in either the purely graph-based or the fully simplicial limit, but rather in an intermediate regime where the upper channel is only partially available. Although the location of the maximum shifts with system size when represented as a function of $p$, this dependence is substantially reduced when the results are expressed in terms of the mean upper degree, indicating that the slowdown is primarily controlled by the average upper connectivity rather than by the face-retention probability alone.

This behavior naturally suggests examining the stationary state itself to investigate how the competition between lower transport and partial higher-order connectivity redistributes stationary weight across the edge space.

We now examine how face dilution affects the stationary state of the effective walk. For each realization, we compute the stationary distribution
$\pi$ of $P_{\mathrm{eff}}$ and quantify its localization through
the inverse participation ratio (IPR)\cite{Wegner1980}, a standard measure of localization also used to characterize eigenvector concentration in networks \cite{MartinPRE2014,MartinezMartinez2019Entropy,MendezBermudez2015},
\begin{equation}
\mathrm{IPR}(\pi)=\sum_{e\in X_1}\pi_e^2,
\end{equation}
where $\pi_e$ denotes the stationary probability of occupying the edge $e$. 
We then average this quantity over independent realizations at fixed
$(n,p,q)$, obtaining
$\langle\mathrm{IPR}(\pi)\rangle$. Large values of
$\langle\mathrm{IPR}(\pi)\rangle$
indicate that the stationary distribution is concentrated on a
small subset of edges, whereas values close to
$1/|X_1|$
correspond to a more delocalized stationary state, close to the uniform distribution. Localized stationary distributions indicate poor transport properties. 

Figure~\ref{Fig:IPR} shows the average inverse participation ratio $\langle \mathrm{IPR}(\pi)\rangle$ of the stationary distribution as a function of the face-retention probability $p$ for a system of size $n=16$. Panels (a)--(c) correspond to the representative values $q=0.1$, $0.5$, and $0.9$, respectively. For all values of $q$, $\langle \mathrm{IPR}(\pi)\rangle$ exhibits a pronounced non-monotonic dependence on $p$, reaching a maximum at intermediate face densities. Thus,for fix $q$ the stationary distribution becomes more localized
precisely in the regime where the upper connectivity is most
heterogeneous. For small values of $q$ (see $q=0.1$ in Fig.~\ref{Fig:IPR}), the walker prefers to visit edges via triangles, but these are scarce for small $p$, thus the walker gets trapped in some set of edges leading to relatively large values of the IPR. As $q$ increases, the overall magnitude of the IPR decreases, showing that increasing the contribution of the lower transition channel progressively suppresses localization and promotes a more delocalized stationary distribution over the edge set. The horizontal dashed line indicates the uniform reference value $1/|X_1|=1/120$.

\begin{figure}
\centering
\includegraphics[trim={0.45cm 12cm 4cm 0},clip, width=1\linewidth]{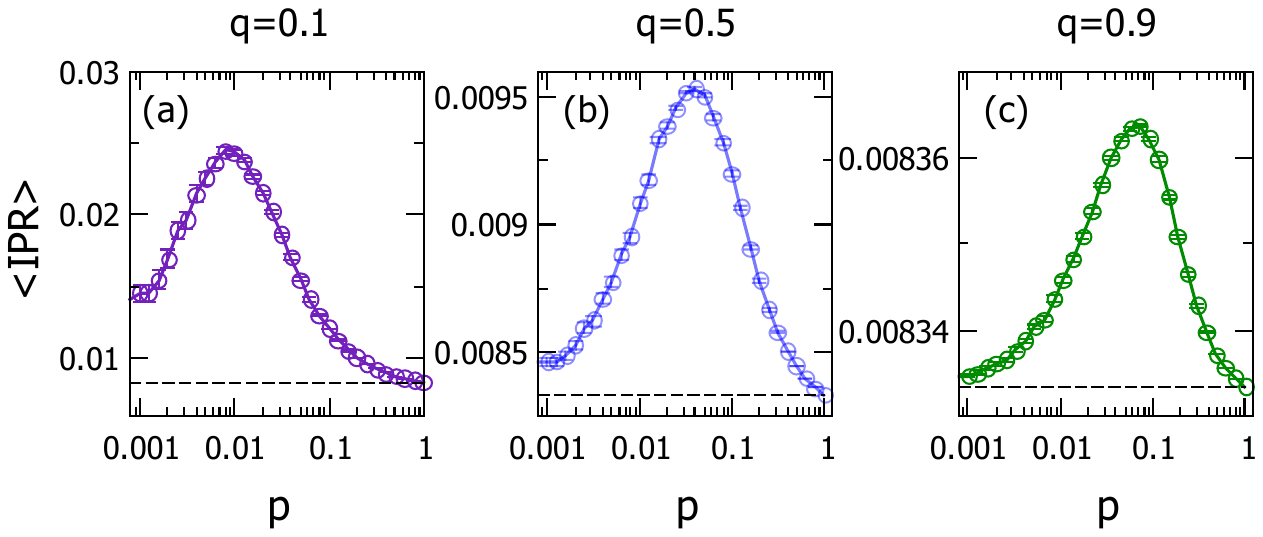}
\caption{
Average inverse participation ratio $\langle \mathrm{IPR}(\pi)\rangle$ of the stationary distribution as a function of the face-retention probability $p$ for a system of size $n=16$. Panels (a)–(c) correspond to $q=0.1$, $0.5$, and $0.9$, respectively. The horizontal dashed line denotes the uniform reference value $1/|X_1|=1/120$. Error bars represent the standard error over independent realizations.}
\label{Fig:IPR}
\end{figure}

\subsection{Dynamics}

The spectral and stationary-state analyses consistently identify an intermediate regime of hindered dynamics. We now investigate whether this same regime also manifests itself in transport observables.

For two distinct edges $e,e'\in X_1$, let
$T_{e\to e'}$ denote the first-passage time from $e$ to $e'$\cite{Redner2001,Condamin2007Nature}, i.e., the random number of steps required for a walker starting from $e$ to reach $e'$ for the first time. The corresponding
mean first-passage time is
\begin{equation}
\mathrm{MFPT}_{e\to e'}
=
\mathbb{E}\!\left[T_{e\to e'}\right].
\end{equation}

As a global measure of transport efficiency \cite{NohRieger2004,Tejedor2009GMFPT}, for each realization, we average over all ordered pairs of distinct edges,
\begin{equation}
\mathrm{MFPT}
=
\frac{1}{|X_1|(|X_1|-1)}
\sum_{\substack{e,e'\in X_1\\ e\neq e'}}
\mathrm{MFPT}_{e\to e'}.
\end{equation}
We then average this quantity over independent configurations of triangles considered in the simplicial complex at fixed
$(n,p,q)$, obtaining $\langle\mathrm{MFPT}\rangle$. Large values of
$\langle\mathrm{MFPT}\rangle$ indicate slower transport and therefore
less efficient exploration of the edge space, whereas smaller values
correspond to faster transport.

Figure~\ref{MFPT} (a) shows $\langle\mathrm{MFPT}\rangle$ as a function of the face-retention probability $p$ for several values of the mixing parameter $q$, at fixed system size $n=16$. The results exhibit a clear non-monotonic dependence on $p$ whenever the upper channel contributes to the dynamics. In particular, for $q<1$ the mean first-passage time develops a pronounced maximum at intermediate values of $p$, indicating that transport is slowest in a partially face-diluted regime. As stated before for the IPR, localized stationary distributions observed at small values of $q$, largely impacts the transport properties of the RW. 

\begin{figure*}
\centering
\includegraphics[trim={0cm 9.1cm 14cm 1.2cm},clip, height=4cm]{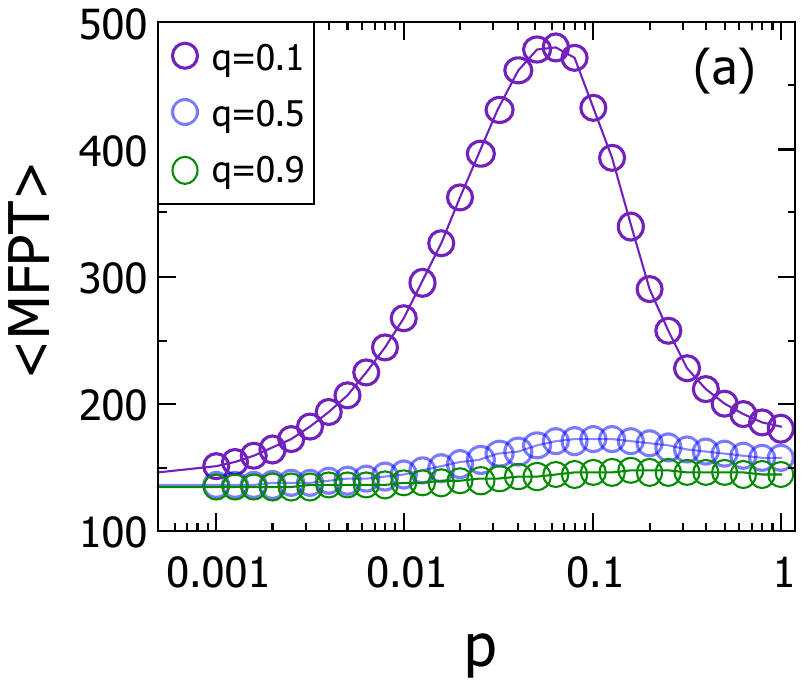}
\includegraphics[height=4cm]{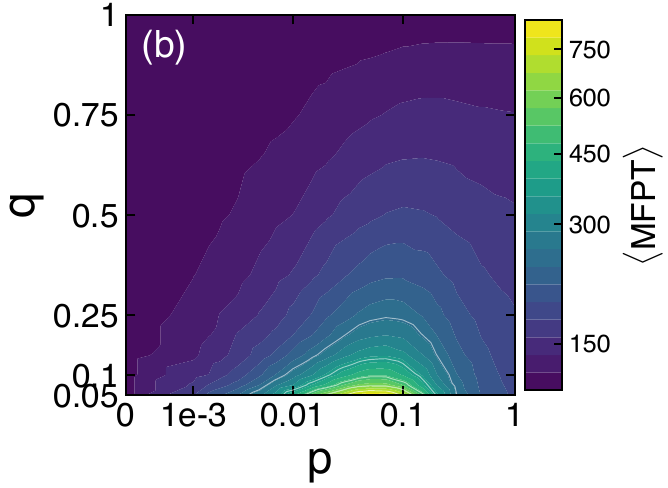}
\caption{
Mean first-passage time $\langle\mathrm{MFPT}\rangle$ for the effective random walk.
(a) Mean first-passage time $\langle\mathrm{MFPT}\rangle$ as a function of the face-retention probability $p$ for representative values of the mixing parameter $q$ and fixed system size $n=16$. 
(b) Contour map of $\langle\mathrm{MFPT}\rangle$ in the $(p,q)$ parameter space. 
}
\label{MFPT}
\end{figure*}

These transport results are fully consistent with the spectral and stationary-state behavior discussed above. In the purely graph-based regime, the walk is governed entirely by the lower channel and transport remains relatively efficient. In the fully simplicial regime, the upper channel becomes uniformly available and the dynamics recover a more homogeneous character. The largest first-passage times arise between these two limits, where face dilution introduces substantial heterogeneity in the upper connectivity without completely suppressing the contribution of triangles. Thus, the competition between lower transport and partial higher-order connectivity produces a regime of maximal dynamical slowdown.

Figure~\ref{MFPT}(b) provides a global view of the transport behavior by mapping $\langle\mathrm{MFPT}\rangle$ over the $(p,q)$ parameter space. The transport bottleneck is confined to a well-defined intermediate region, where partial face dilution coexists with small to moderate values of $q$. In agreement with Fig.~\ref{MFPT}(a), the largest values of $\langle\mathrm{MFPT}\rangle$ occur precisely in this region, while transport remains comparatively efficient in both the graph-dominated ($q\rightarrow1$) and highly connected simplicial regimes.

While the mean first-passage time identifies the transport bottleneck, it does not reveal how transport is distributed across individual trajectories. To gain further insight into the origin of this slowdown, we examine the distribution of first-passage times between a fixed pair of edges.

For the fixed source-target pair considered below, we characterize the distribution through its empirical survival function,
\begin{equation}
S(\tau)=\Pr\!\left(T_{e\to e'}>\tau\right),
\end{equation}
which gives the probability that the target edge has not yet been reached after $\tau$ steps.

Figure~\ref{Fig08} shows $S(\tau)$ for the fixed source and target edges $e_1=(1,2)$ and $e_{120}=(15,16)$, respectively, at $n=16$ and $q=0.1$. We consider three representative face densities: a dilute regime before the transport bottleneck, $p=0.005$; a value close to the maximum of $\langle\mathrm{MFPT}\rangle$, $p=0.06$; and a denser simplicial regime after the maximum, $p=0.5$.

\begin{figure}
\centering
\includegraphics[trim={0.52cm 12cm 4cm 0},clip, width=1.1\linewidth]{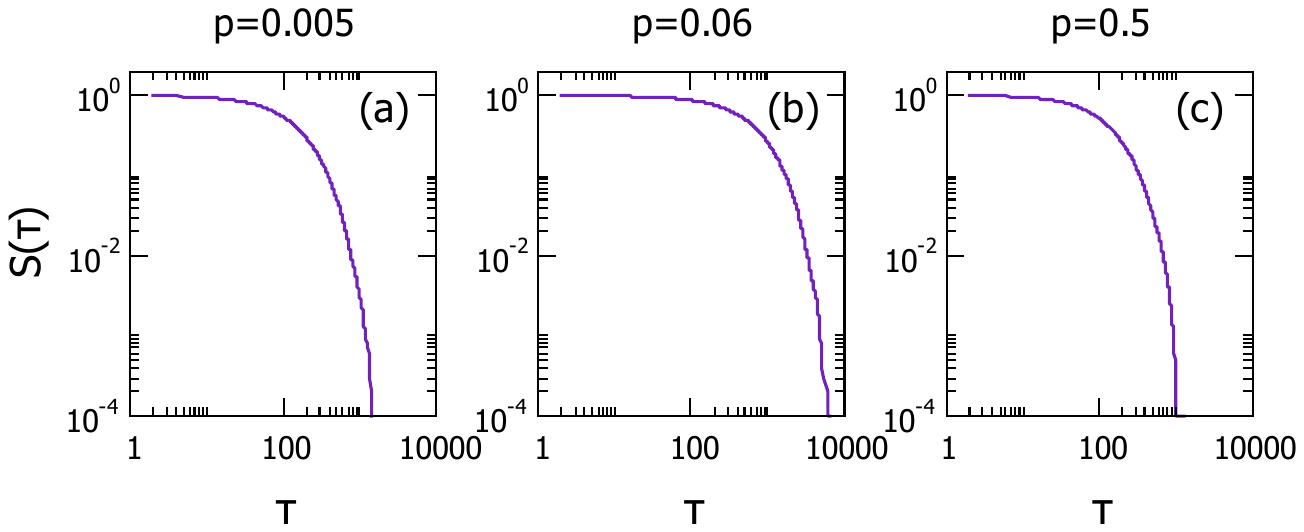}

\caption{
Empirical survival probability of the first-passage time between the fixed
source edge $e_1=(1,2)$ and target edge $e_{120}=(15,16)$ for $n=16$ and
$q=0.1$. Panels (a)--(c) correspond to $p=0.005$, $0.06$, and $0.5$,
respectively, representing regimes before, near, and after the maximum of
$\langle\mathrm{MFPT}\rangle$.
}
\label{Fig08}
\end{figure}
The survival curves reveal a dynamical feature that is not conveyed by the
mean alone. Before and after the bottleneck, the survival probability decays
rapidly, indicating that long first-passage trajectories are comparatively
rare. Near the maximum of $\langle\mathrm{MFPT}\rangle$, however, the decay
extends over a substantially longer time scale. The transport bottleneck identified in Fig.~\ref{MFPT}
is therefore accompanied by a broadening of the first-passage-time distribution and by an enhanced probability of unusually
long trajectories.

Taken together, the structural, spectral, stationary-state, and first-passage analyses consistently identify the same intermediate regime generated by partial face retention. The heterogeneous upper connectivity simultaneously slows relaxation, localizes the stationary state, increases the mean first-passage time, and broadens the distribution of first-passage trajectories. These independent observations demonstrate that the resulting transport bottleneck is a robust dynamical consequence of partial higher-order connectivity rather than a feature of any particular observable.

\subsection{The singular limit $q=0$}

The results above show that partial face dilution produces a pronounced transport bottleneck whenever the upper channel contributes to the dynamics. To understand the mechanism behind this behavior, we consider the limiting case $q=0$, where the effective transition operator is given by
\begin{equation}
P_{\mathrm{eff}}
=
P^{\downarrow}(I-A)
+
P^{\uparrow}A,
\label{eq:Peff_q0}
\end{equation}
where the diagonal matrix $A$ has entries equal to one for active edges, namely those incident to at least one face, and zero otherwise.
Unlike previous cases, Eq.~(\ref{eq:Peff_q0}) does not define a single transition rule over the entire edge space. Active edges evolve through the upper transition operator $P^{\uparrow}$, while inactive edges follow $P^{\downarrow}$. Consequently, the transition mechanism becomes state dependent, making the dynamics qualitatively different from those for $q>0$.
To characterize the resulting dynamics, we analyze the spectrum of
$P_{\mathrm{eff}}$. We denote by
\begin{equation*}
m_1=\mathrm{mult}(\lambda=1)
\end{equation*}
the algebraic multiplicity of the unit eigenvalue of
$P_{\mathrm{eff}}$. For an irreducible stochastic matrix, one has $m_1=1$, whereas $m_1>1$ indicates the existence of multiple closed communicating classes and, therefore, the loss of ergodicity.
For each value of $p$, we compute $m_1$ for every realization and evaluate its ensemble average, $\langle m_1\rangle$.

\begin{figure}
\centering
\includegraphics[trim={0.3cm 12.2cm 9.1cm 0.5cm},clip, width=0.85\linewidth]{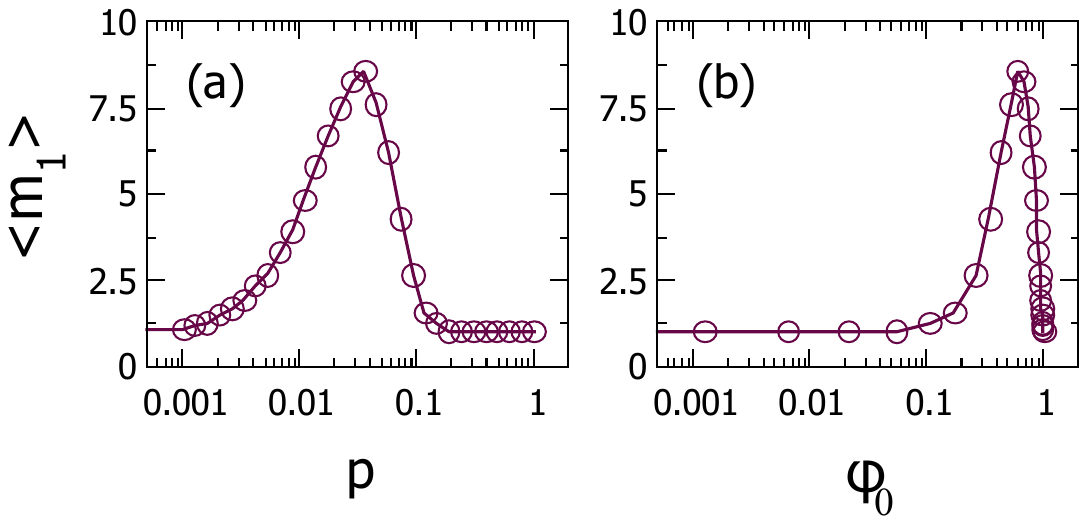}
\caption{ Average unit eigenvalue multiplicity $\langle m_1\rangle$ in the singular limit $q=0$. (a) As a function of the face-retention probability $p$. (b) As a function of the inactive-edge fraction $\langle\phi_0\rangle$. }
\label{Fig:q0}
\end{figure}
Figure~\ref{Fig:q0}(a) shows that the singular nature of the dynamics is confined to an intermediate interval of face-retention probabilities. At $p=0$, one has $A=0$, so that $P_{\mathrm{eff}}=P^\downarrow$ and the unit eigenvalue is nondegenerate. This behavior persists for sufficiently small values of $p$, where only a few active edges are present. As $p$ increases, however, active and inactive edges coexist throughout the complex, producing a rapid increase in the multiplicity of the unit eigenvalue. The average multiplicity reaches a maximum at intermediate face densities and then gradually decreases as the upper channel becomes available almost everywhere.

The structural origin of this behavior becomes clearer when the same quantity is represented as a function of the inactive-edge fraction $\langle\phi_0\rangle$, shown in Fig.~\ref{Fig:q0}(b). Expressing the results in terms of $\langle\phi_0\rangle$ provides a structural interpretation of the spectral degeneracy in terms of the availability of the upper transition channel.

The largest values of $\langle m_1\rangle$ occur at intermediate values of $\langle\phi_0\rangle$, where active and inactive edges coexist in comparable proportions. This observation provides a direct structural interpretation of the singular dynamics at $q=0$. When nearly all edges are either active or inactive, a single transition mechanism dominates throughout the complex and the walk remains globally coherent. In contrast, intermediate values of $\langle\phi_0\rangle$ maximize the spatial coexistence of the two local transport rules, fragmenting the dynamics into multiple recurrent classes and producing the observed spectral degeneracy.

This singular behavior helps elucidate the transport bottleneck observed for $q>0$. Although the lower transition remains accessible for $q>0$, the dynamics still inherit the strong heterogeneity generated by the intermittent availability of the upper channel. The limiting case therefore suggests that the slowdown observed throughout the previous sections is closely associated with the heterogeneous availability of higher-order transitions rather than with face dilution alone.

\section{Conclusions and discussion}
\label{conclusions}

In this work, we introduced an incidence-based random walk on the edge set of a random two-dimensional simplicial complex with a complete $1$-skeleton and independently retained triangular faces. The model combines two transport channels, one mediated by vertices and the other by triangles, and remains well defined even when some edges are not incident to any face. This construction isolates the role of higher-order connectivity on transport without modifying the underlying pairwise support of the walk.

Our results show that the dynamical effects of higher-order interactions are strongest in an intermediate regime of face retention. Partial face dilution generates a heterogeneous distribution of upper connectivity, characterized by broad fluctuations in the triangle degree and the coexistence of active and inactive edges. This structural heterogeneity conspicuously impacts 
the dynamics: the second-largest eigenvalue modulus $\langle\lambda_*\rangle$, the stationary inverse participation ratio $\langle\mathrm{IPR}(\pi)\rangle$, and the mean first-passage time $\langle\mathrm{MFPT}\rangle$ all exhibit pronounced non-monotonic behavior, attaining  their largest values. indicating poor transport, at intermediate face densities. The survival probability of first-passage times further shows that this regime is accompanied by an enhanced probability of exceptionally long trajectories, demonstrating that the transport bottleneck is not only an average effect but also a property of the full first-passage-time distribution.

The singular limit $q=0$ provides additional insight into this behavior. In this limit, the local transport mechanism becomes state dependent because active and inactive edges evolve through different transition operators. The coexistence of both transport rules causes a marked increase in the multiplicity of the unit eigenvalue, revealing multiple recurrent classes when the upper channel is only partially available. Although this singular behavior disappears for any $q>0$, it highlights the dynamical consequences of heterogeneous higher-order connectivity and offers a useful perspective for interpreting the transport slowdown observed in the intermediate regime.

The present framework finds applications in systems where pairwise interactions are not fixed but evolve in time. In our case, the stochastic dynamics of edge activation are modeled by a random walk on the edge space, whose transition probabilities incorporate the underlying higher-order structure of simplicial complexes. It can also be extended in several directions. Natural generalizations include random simplicial complexes with non-complete $1$-skeletons, higher-dimensional simplicial complexes, and continuous-time versions of the current dynamics. It would also be interesting to establish connections with diffusion processes generated by Hodge Laplacians and other higher-order operators. More generally, the proposed incidence-based framework provides a simple setting to investigate how heterogeneous higher-order interactions shape spectral and transport properties beyond pairwise network dynamics.

\begin{acknowledgments}
C.T.M.-M. thanks support from CONAHCYT (CVU No.~784756).
This work was supported by UNAM-PAPIIT IN110626.
\end{acknowledgments}

\appendix

\section{Proofs of basic properties}

\subsection{Proof that $P_{\mathrm{eff}}$ is column-stochastic}

\begin{proof}
Since each edge has exactly two endpoints, the columns of $P_{10}$ sum to one. Since $P_{01}$ chooses uniformly among the edges incident to a vertex, its columns also sum to one. Hence
\begin{equation}
\mathbf{1}^{\top} P^{\downarrow}
=
\mathbf{1}^{\top} P_{01}P_{10}
=
\mathbf{1}^{\top}.
\end{equation}

Likewise, the columns of $P_{21}$ sum to one because every triangle has exactly three edges. The columns of $P_{12}$ sum to $a_e$, namely
\begin{equation}
\mathbf{1}^{\top} P_{12}
=
\mathbf{1}^{\top} A,
\end{equation}
because the sum is $1$ when $d_T(e)>0$ and $0$ otherwise. Therefore,
\begin{equation}
\mathbf{1}^{\top} P^{\uparrow}
=
\mathbf{1}^{\top} P_{21}P_{12}
=
\mathbf{1}^{\top} A.
\end{equation}
Using \eqref{eq:Peff},
\begin{equation}
\mathbf{1}^{\top} P_{\mathrm{eff}}(q)
=
\mathbf{1}^{\top}\!\bigl[I-(1-q)A\bigr]
+
(1-q)\mathbf{1}^{\top} A
=
\mathbf{1}^{\top}.
\end{equation}
Hence $P_{\mathrm{eff}}(q)$ is column-stochastic.
\end{proof}

\section{Basic properties of the lower, upper, and effective walks}
\label{sec:basic-properties}

We collect here the main probabilistic properties of the lower, upper, and effective kernels.

Define the set of active edges by
\begin{equation}
X_1^+ = \{e \in X_1 : d_T(e)>0\}.
\end{equation}

\subsection{Lower walk}

Recall that
\begin{equation}
P^{\downarrow}
=
P_{01} P_{10}
=
\frac{1}{2}\,\mathbf{B}_1^{\top} D_V^{-1}\mathbf{B}_1,
\end{equation}
with entries
\begin{equation}
P^{\downarrow}_{e'e}
=
\frac{1}{2}\sum_{v \in X_0}
\frac{\mathbf{1}_{\{v \subset e\}} \mathbf{1}_{\{v \subset e'\}}}{d_V(v)}.
\end{equation}

\begin{prop}
The lower walk $P^{\downarrow}$ is symmetric and column-stochastic. Consequently, it is also row-stochastic, hence doubly stochastic.
\end{prop}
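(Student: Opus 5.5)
The plan is to work directly from the matrix factorization $P^{\downarrow}=\tfrac12\,\mathbf B_1^{\top}D_V^{-1}\mathbf B_1$ and obtain each of the three properties in turn. None of the steps is expected to be a real obstacle.

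First, symmetry. I would compute the transpose:
\begin{equation*}
\bigl(P^{\downarrow}\bigr)^{\top}=\tfrac12\,\mathbf B_1^{\top}\bigl(D_V^{-1}\bigr)^{\top}\mathbf B_1=\tfrac12\,\mathbf B_1^{\top}D_V^{-1}\mathbf B_1=P^{\downarrow}.
\end{equation*}
This holds because $D_V^{-1}$ is diagonal. $D_V^{-1}$ is well defined because every vertex lies in at least one edge, as noted in Sect.~\ref{sec:prelim}. The same conclusion can be read off the entrywise formula: the summand $\mathbf 1_{\{v\subset e\}}\mathbf 1_{\{v\subset e'\}}/d_V(v)$ is invariant under swapping $e$ and $e'$.

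Second, column-stochasticity. I would invoke the first part of the proof in Appendix~A. There $\mathbf 1^{\top}P_{10}=\mathbf 1^{\top}$ holds because every edge has exactly two endpoints. Also $\mathbf 1^{\top}P_{01}=\mathbf 1^{\top}$, since $\sum_e (\mathbf B_1)_{ve}=d_V(v)$ and so each column of $\mathbf B_1^{\top}D_V^{-1}$ sums to one. Together these give $\mathbf 1^{\top}P^{\downarrow}=\mathbf 1^{\top}$. Nonnegativity of the entries is immediate from the formula.

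Third, row-stochasticity. I would transpose the column identity and use the symmetry from the first step:
\begin{equation*}
P^{\downarrow}\mathbf 1=\bigl(\mathbf 1^{\top}(P^{\downarrow})^{\top}\bigr)^{\top}=\bigl(\mathbf 1^{\top}P^{\downarrow}\bigr)^{\top}=\mathbf 1 .
\end{equation*}
Thus the rows also sum to one, and $P^{\downarrow}$ is doubly stochastic.

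The only point needing care is that $D_V^{-1}$ exists; in the general setting this requires the standing assumption that there are no isolated vertices.
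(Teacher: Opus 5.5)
Your proof is correct and matches the paper's argument: symmetry by transposing $\tfrac12\mathbf B_1^{\top}D_V^{-1}\mathbf B_1$ with $D_V^{-1}$ diagonal, unit column sums because each edge has two endpoints and each vertex $v$ lies in exactly $d_V(v)$ edges, and row-stochasticity from symmetry. The only cosmetic difference is that you get the column sums from the factorization $\mathbf 1^{\top}P_{01}P_{10}=\mathbf 1^{\top}$ (as in Appendix~A), whereas the paper's proof writes out the same double sum entrywise.
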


\begin{proof}
Since $D_V^{-1}$ is diagonal,
\begin{equation}
\bigl(P^{\downarrow}\bigr)^{\top}
=
\frac{1}{2}\bigl(\mathbf{B}_1^{\top} D_V^{-1}\mathbf{B}_1\bigr)^{\top}
=
\frac{1}{2}\,\mathbf{B}_1^{\top} D_V^{-1}\mathbf{B}_1
=
P^{\downarrow}.
\end{equation}
Thus $P^{\downarrow}$ is symmetric.

Fix $e \in X_1$. Then
\begin{equation}
\begin{aligned}
\sum_{e' \in X_1} P^{\downarrow}_{e'e}
&=
\sum_{e' \in X_1}
\frac{1}{2}\sum_{v \in X_0}
\frac{\mathbf{1}_{\{v \subset e\}} \mathbf{1}_{\{v \subset e'\}}}{d_V(v)}
\\
&=
\frac{1}{2}\sum_{v \subset e}\frac{1}{d_V(v)}
\sum_{e' \ni v} 1.
\end{aligned}
\end{equation}
Since the number of edges incident to $v$ is exactly $d_V(v)$,
\begin{equation}
\sum_{e' \in X_1} P^{\downarrow}_{e'e}
=
\frac{1}{2}\sum_{v \subset e} 1
=
1,
\end{equation}
because every edge has exactly two endpoints. Therefore each column sums to one. Since $P^{\downarrow}$ is symmetric, it is also row-stochastic.
\end{proof}

\begin{cor}
The uniform distribution on edges,
\begin{equation}
\pi^{\downarrow}_e = \frac{1}{|X_1|},
\qquad e \in X_1,
\end{equation}
is stationary for $P^{\downarrow}$. Moreover, $P^{\downarrow}$ is reversible with respect to $\pi^{\downarrow}$.
\end{cor}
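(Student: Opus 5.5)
The corollary follows directly from the Proposition just proved, which shows that $P^{\downarrow}$ is symmetric and doubly stochastic. I would establish the two claims in turn, stationarity first and then reversibility, using only that proposition and the column-vector convention $p(t+1)=P\,p(t)$ fixed in Sec.~\ref{sec:model}.

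For stationarity, I would write $\pi^{\downarrow}=\mathbf{1}/|X_1|$ and compute $P^{\downarrow}\pi^{\downarrow}$ entrywise:
\begin{equation*}
\bigl(P^{\downarrow}\pi^{\downarrow}\bigr)_{e'}=\frac{1}{|X_1|}\sum_{e\in X_1}P^{\downarrow}_{e'e}.
\end{equation*}
The right-hand side is a row sum. Under the column convention, stationarity of the uniform vector is therefore equivalent to \emph{row}-stochasticity, not column-stochasticity. The proposition supplies exactly this: $P^{\downarrow}$ is symmetric, so each row sum equals the corresponding column sum, which is $1$. Hence $P^{\downarrow}\pi^{\downarrow}=\pi^{\downarrow}$. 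Since $\pi^{\downarrow}$ is nonnegative and sums to one, it is a stationary distribution.

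For reversibility, I would verify detailed balance in the column convention, namely $P^{\downarrow}_{e'e}\,\pi^{\downarrow}_e=P^{\downarrow}_{ee'}\,\pi^{\downarrow}_{e'}$ for all $e,e'\in X_1$. Because $\pi^{\downarrow}$ is constant, this reduces to $P^{\downarrow}_{e'e}=P^{\downarrow}_{ee'}$, which is the symmetry of $P^{\downarrow}$ from the proposition. It can also be read off directly from the entry formula, which is manifestly symmetric in $e$ and $e'$.

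The argument contains no real obstacle. The only point requiring care is bookkeeping with the convention: one must invoke row-stochasticity, obtained from symmetry combined with column-stochasticity, rather than column-stochasticity alone, which only guarantees that probability mass is preserved. I would not address uniqueness, since the corollary does not claim it. If it were wanted, it would follow from irreducibility, as the support graph is connected whenever the $1$-skeleton is connected, together with the Perron–Frobenius theorem.
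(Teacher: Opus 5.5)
Your proposal is correct and follows essentially the same route as the paper. Stationarity comes from $P^{\downarrow}$ being doubly stochastic; your remark that the row sums, obtained from symmetry plus column-stochasticity, are what matter under the column convention is exactly what the paper's appeal to double stochasticity encodes. Detailed balance then comes from the symmetry of $P^{\downarrow}$ together with the constancy of $\pi^{\downarrow}$.
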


\begin{proof}
Since $P^{\downarrow}$ is doubly stochastic,
\begin{equation}
P^{\downarrow}\pi^{\downarrow} = \pi^{\downarrow}.
\end{equation}
By symmetry,
\begin{equation}
\pi^{\downarrow}_e P^{\downarrow}_{e'e}
=
\frac{1}{|X_1|} P^{\downarrow}_{e'e}
=
\frac{1}{|X_1|} P^{\downarrow}_{ee'}
=
\pi^{\downarrow}_{e'} P^{\downarrow}_{ee'},
\end{equation}
which is the detailed balance condition.
\end{proof}

For $e=\{u,v\}$,
\begin{equation}
P^{\downarrow}_{ee}
=
\frac{1}{2}\left(\frac{1}{d_V(u)}+\frac{1}{d_V(v)}\right),
\end{equation}
so the lower walk always has a positive self-loop.

\subsection{Upper walk}

Recall that
\begin{equation}
P^{\uparrow}
=
P_{21} P_{12}
=
\frac{1}{3}\,\mathbf{B}_2 \mathbf{B}_2^{\top} \widetilde{D}_T^{-1},
\end{equation}
and for active edges $e \in X_1^+$,
\begin{equation}
P^{\uparrow}_{e'e}
=
\frac{1}{3}\sum_{f \in X_2}
\frac{\mathbf{1}_{\{e \subset f\}} \mathbf{1}_{\{e' \subset f\}}}{d_T(e)}.
\end{equation}
If $d_T(e)=0$, the whole column indexed by $e$ vanishes, so $P^{\uparrow}$ is not a Markov kernel on all of $X_1$, but it is one on $X_1^+$.

\begin{prop}
The restriction of $P^{\uparrow}$ to $X_1^+$ is column-stochastic.
\end{prop}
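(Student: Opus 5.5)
The plan is to compute the column sums of $P^{\uparrow}$ directly from the entry formula, mirroring the argument used for $P^{\downarrow}$ in the preceding proposition. Two points need checking. First, the restriction is genuinely a kernel on $X_1^+$: it must not leak mass to inactive edges. Second, each column indexed by an active edge must sum to one.

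\emph{Closure of $X_1^+$.} Let $e\in X_1^+$ and suppose $P^{\uparrow}_{e'e}>0$. By the entry formula there is then some $f\in X_2$ with $e\subset f$ and $e'\subset f$. Hence $d_T(e')\ge 1$, so $e'\in X_1^+$. This shows that every column indexed by an active edge is supported inside $X_1^+$. Restricting rows and columns to $X_1^+$ therefore discards only zero entries of these columns.

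\emph{Column sums.} Fix $e\in X_1^+$ and exchange the order of summation:
\begin{equation*}
\sum_{e'\in X_1^+} P^{\uparrow}_{e'e}
=\sum_{e'\in X_1} P^{\uparrow}_{e'e}
=\frac{1}{3\,d_T(e)}\sum_{f\supset e}\ \sum_{e'\subset f} 1 .
\end{equation*}
Every triangle has exactly three boundary edges, so the inner sum equals $3$. The outer sum runs over the $d_T(e)$ triangles containing $e$. The total is therefore $\frac{1}{3d_T(e)}\cdot 3\, d_T(e)=1$.

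Equivalently, in matrix form one can invoke the identity $\mathbf 1^\top P^{\uparrow}=\mathbf 1^\top A$ from Appendix~A and read off its entries for $e\in X_1^+$, where $a_e=1$. Nonnegativity of the entries is immediate from their definition.

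The only step requiring care is the closure argument. It guarantees that the restricted matrix has the same column sums as the full one, since no probability mass is sent to inactive edges. Once that is established, the rest is a routine double-counting computation.
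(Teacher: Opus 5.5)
Your proposal is correct and follows the paper's proof, which is exactly the double-counting computation $\sum_{e'}P^{\uparrow}_{e'e}=\frac{1}{3d_T(e)}\sum_{f\supset e}\sum_{e'\subset f}1=1$ for $e\in X_1^+$. The paper sums over all of $X_1$ and leaves implicit the closure fact you prove: an active column is supported on $X_1^+$, since any $e'$ sharing a face with $e$ has $d_T(e')\ge 1$. Your extra step therefore usefully justifies restricting the rows to $X_1^+$ as well as the columns.
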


\begin{proof}
Fix $e \in X_1^+$, so that $d_T(e)>0$. Then
\begin{equation}
\begin{aligned}
\sum_{e' \in X_1} P^{\uparrow}_{e'e}
&=
\sum_{e' \in X_1} \frac{1}{3}\sum_{f \in X_2}
\frac{\mathbf{1}_{\{e \subset f\}} \mathbf{1}_{\{e' \subset f\}}}{d_T(e)}
\\
&=
\frac{1}{3d_T(e)}
\sum_{f \supset e} \sum_{e' \subset f} 1.
\end{aligned}
\end{equation}
Each triangle has exactly three boundary edges, hence
\begin{equation}
\sum_{e' \in X_1} P^{\uparrow}_{e'e}
=
\frac{1}{3d_T(e)} \sum_{f \supset e} 3
=
1.
\end{equation}
\end{proof}

\begin{prop}
On $X_1^+$, the upper walk is reversible with respect to
\begin{equation}
\pi^{\uparrow}_e
=
\frac{d_T(e)}{\sum_{\tilde e \in X_1^+} d_T(\tilde e)}.
\end{equation}
In particular, $\pi^{\uparrow}$ is stationary for the restricted upper walk.
\end{prop}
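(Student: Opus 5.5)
The plan is to verify detailed balance directly from the entrywise formula for $P^{\uparrow}$ on $X_1^+$, and then deduce stationarity by summing over one index, using the column-stochasticity already established in the previous proposition.

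\textbf{Closure of $X_1^+$.} If $e\in X_1^+$ and $P^{\uparrow}_{e'e}>0$, then $e'$ lies in a common triangle $f\in X_2$ with $e$. Hence $d_T(e')\ge 1$ and $e'\in X_1^+$. So the restricted kernel maps $X_1^+$ into itself.

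\textbf{Detailed balance.} Set $Z=\sum_{\tilde e\in X_1^+} d_T(\tilde e)>0$; this requires $X_2\neq\varnothing$, which we assume implicitly since otherwise $X_1^+$ is empty. For $e,e'\in X_1^+$, write
\begin{equation*}
\pi^{\uparrow}_e P^{\uparrow}_{e'e}
=\frac{d_T(e)}{Z}\cdot\frac{1}{3d_T(e)}\sum_{f\in X_2}\mathbf 1_{\{e\subset f\}}\mathbf 1_{\{e'\subset f\}}
=\frac{1}{3Z}\,\bigl|\{f\in X_2: e\cup e'\subset f\}\bigr|.
\end{equation*}
The factor $d_T(e)$ cancels exactly, and the result is symmetric in $e,e'$. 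It therefore equals $\pi^{\uparrow}_{e'}P^{\uparrow}_{ee'}$. Equivalently, in matrix form, $P^{\uparrow}\,\mathrm{diag}(\pi^{\uparrow})=\frac{1}{3Z}\mathbf B_2\mathbf B_2^{\top}A$ restricted to $X_1^+$, which is symmetric there.

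\textbf{Stationarity.} Sum the detailed balance identity over $e\in X_1^+$. This gives $(P^{\uparrow}\pi^{\uparrow})_{e'}=\pi^{\uparrow}_{e'}\sum_{e}P^{\uparrow}_{ee'}=\pi^{\uparrow}_{e'}$, by column-stochasticity of the restriction (previous proposition). Finally, $\pi^{\uparrow}$ is a probability vector: its entries are nonnegative, normalized by $Z$, and strictly positive on $X_1^+$.

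\textbf{Main obstacle.} There is no real obstacle; the only care needed is the bookkeeping at the boundary between $X_1^+$ and inactive edges. The closure step handles this, as does the convention $\widetilde d_T^{-1}=0$, which never enters because both indices lie in $X_1^+$.
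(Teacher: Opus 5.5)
Your proof is correct and follows the paper's argument: the factor $d_T(e)$ cancels in $\pi^{\uparrow}_e P^{\uparrow}_{e'e}$, leaving $\frac{1}{3Z}$ times a triangle count that is symmetric in $e,e'$. Your closure of $X_1^+$ under the upper walk and the explicit summation of detailed balance to obtain stationarity supply details the paper leaves implicit.
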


\begin{proof}
For $e,e' \in X_1^+$,
\begin{equation}
\begin{aligned}
\pi^{\uparrow}_e P^{\uparrow}_{e'e}
&=
\frac{d_T(e)}{Z}
\cdot
\frac{1}{3}\sum_{f \in X_2}
\frac{\mathbf{1}_{\{e \subset f\}} \mathbf{1}_{\{e' \subset f\}}}{d_T(e)}
\\
&=
\frac{1}{3Z}\sum_{f \in X_2}
\mathbf{1}_{\{e \subset f\}} \mathbf{1}_{\{e' \subset f\}},
\end{aligned}
\end{equation}
where
\begin{equation}
Z := \sum_{\tilde e \in X_1^+} d_T(\tilde e).
\end{equation}
The right-hand side is symmetric in $e$ and $e'$, so detailed balance holds.
\end{proof}

Moreover, for every active edge $e \in X_1^+$,
\begin{equation}
P^{\uparrow}_{ee} = \frac{1}{3}.
\end{equation}

\subsection{Effective walk}

The effective transition is
\begin{equation}
P_{\mathrm{eff}}(q)
=
P^{\downarrow}\!\bigl[I-(1-q)A\bigr]
+
(1-q)\,P^{\uparrow}A,
\qquad q \in [0,1].
\end{equation}

\begin{prop}
For every $q \in [0,1]$, the effective walk admits at least one stationary distribution, namely a probability vector $\pi$ satisfying
\begin{equation}
P_{\mathrm{eff}}(q)\,\pi = \pi.
\end{equation}
\end{prop}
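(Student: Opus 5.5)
The plan is to deduce existence of a stationary distribution from two facts: $P_{\mathrm{eff}}(q)$ is a finite column-stochastic matrix with nonnegative entries, and any such matrix has a fixed probability vector. Column-stochasticity is already established in Appendix~A. Nonnegativity follows directly from the definition~\eqref{eq:Peff}. The matrices $P^\downarrow$ and $P^\uparrow$ have nonnegative entries, $A$ is a $0/1$ diagonal matrix, and the diagonal matrix $I-(1-q)A$ has entries $1$ or $q$, hence nonnegative for $q\in[0,1]$. Products and nonnegative combinations of nonnegative matrices stay nonnegative.

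For the fixed point I will use one of two routes, chosen for self-containedness.

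\emph{Route 1 (Brouwer).} Let $\Delta=\{x\in\mathbb R^{n_1}: x_e\ge 0,\ \mathbf 1^\top x=1\}$ be the probability simplex, which is compact and convex. Nonnegativity gives $P_{\mathrm{eff}}x\ge 0$, and column-stochasticity gives $\mathbf 1^\top P_{\mathrm{eff}}x=\mathbf 1^\top x=1$. So $x\mapsto P_{\mathrm{eff}}x$ is a continuous map of $\Delta$ into itself, and Brouwer's theorem yields $\pi\in\Delta$ with $P_{\mathrm{eff}}\pi=\pi$.

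\emph{Route 2 (Cesàro averages).} To avoid Brouwer, take any $x_0\in\Delta$ and form $\bar x_T=\frac1T\sum_{t=0}^{T-1}P_{\mathrm{eff}}^t x_0\in\Delta$. By compactness, a subsequence converges to some $\pi\in\Delta$. Moreover,
\begin{equation*}
P_{\mathrm{eff}}\bar x_T-\bar x_T=\tfrac1T\bigl(P_{\mathrm{eff}}^T x_0-x_0\bigr),
\end{equation*}
whose $\ell^1$ norm is at most $2/T$. This bound uses the fact that a column-stochastic nonnegative matrix preserves the $\ell^1$ norm of nonnegative vectors. Passing to the limit gives $P_{\mathrm{eff}}\pi=\pi$.

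There is no real obstacle here; the only point needing care is that the fixed point must be a \emph{probability} vector, not merely an eigenvector for eigenvalue $1$. That eigenvalue exists trivially, since $\mathbf 1^\top P_{\mathrm{eff}}=\mathbf 1^\top$ makes $1$ an eigenvalue of $P_{\mathrm{eff}}^\top$ and hence of $P_{\mathrm{eff}}$. But that eigenvector could a priori have mixed signs, which is why I go through the simplex argument (or, equivalently, would cite Perron--Frobenius for nonnegative matrices). Uniqueness is not claimed, consistent with the $q=0$ discussion where $m_1>1$ can occur.
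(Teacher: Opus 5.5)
Your proposal is correct and follows the same route as the paper. The paper observes that $P_{\mathrm{eff}}(q)$ is a finite column-stochastic matrix and then cites standard finite-state Markov-chain theory for the existence of a stationary distribution. You do two things beyond that: you check entrywise nonnegativity explicitly (the paper takes it as implicit in ``column-stochastic''), and you prove the cited standard fact yourself, by Brouwer's theorem on the probability simplex or by Ces\`aro averaging; both arguments are valid.
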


\begin{proof}
Since $P_{\mathrm{eff}}(q)$ is a finite column-stochastic matrix, it defines a Markov chain on the finite state space $X_1$. Standard finite-state Markov-chain theory then guarantees the existence of at least one stationary distribution.
\end{proof}

If $q>0$, every allowed lower transition remains present in the effective kernel.

\begin{prop}
Assume $q>0$. Then
\begin{equation}
\bigl(P_{\mathrm{eff}}\bigr)_{e'e} > 0
\qquad \text{whenever} \qquad
P^{\downarrow}_{e'e} > 0.
\end{equation}
Hence the support of $P_{\mathrm{eff}}(q)$ contains the support of the lower walk. In particular, if the line graph of the $1$-skeleton is connected, then $P_{\mathrm{eff}}(q)$ is irreducible.
\end{prop}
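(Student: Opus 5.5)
The plan is to work column by column, using the column-wise description of $P_{\mathrm{eff}}(q)$ given right after Eq.~\eqref{eq:Peff}. Fix a source edge $e$ and a target edge $e'$ with $P^{\downarrow}_{e'e}>0$, and split according to whether $e$ is active.

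\emph{Inactive source.} If $d_T(e)=0$, then $a_e=0$ and the column of $P_{\mathrm{eff}}(q)$ indexed by $e$ equals $P^{\downarrow}_{(\cdot,e)}$. Hence $(P_{\mathrm{eff}})_{e'e}=P^{\downarrow}_{e'e}>0$ for every $q$.

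\emph{Active source.} If $d_T(e)>0$, then
\begin{equation*}
(P_{\mathrm{eff}})_{e'e}=q\,P^{\downarrow}_{e'e}+(1-q)\,P^{\uparrow}_{e'e}.
\end{equation*}
All entries of $P^{\uparrow}$ are nonnegative, since it is a product of nonnegative matrices $\tfrac13\mathbf B_2\mathbf B_2^\top\widetilde D_T^{-1}$. Also $1-q\ge 0$. So the second term is nonnegative, and the first is strictly positive because $q>0$ and $P^{\downarrow}_{e'e}>0$. This gives the inclusion of supports. Note that this is exactly where $q>0$ is needed: at $q=0$ the lower contribution vanishes on active columns, consistent with the singular behavior discussed for $q=0$.

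\emph{Irreducibility.} Recall from the explicit formula for $P^{\downarrow}_{e'e}$ that $P^{\downarrow}_{e'e}>0$ if and only if $e$ and $e'$ share a vertex (or $e=e'$). So the off-diagonal support of $P^{\downarrow}$ is precisely the adjacency of the line graph $L$ of the $1$-skeleton, and this adjacency is symmetric. Suppose $L$ is connected, and take any $e,e'$. There is a path $e=e_0,e_1,\dots,e_k=e'$ in $L$. Each step has $(P_{\mathrm{eff}})_{e_{i+1}e_i}>0$ by the first part, so
\begin{equation*}
(P_{\mathrm{eff}}^k)_{e'e}\ge\prod_i (P_{\mathrm{eff}})_{e_{i+1}e_i}>0.
\end{equation*}
Thus every state is accessible from every other, and the chain is irreducible.

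Combined with the positive self-loops noted earlier, this also gives aperiodicity, though the statement does not require it. The only delicate point is the sign argument for the upper term; the rest is bookkeeping, so I expect no real obstacle. For the ensemble $X(n,p)$ the hypothesis holds whenever $n\ge 3$, since $K_n$ is connected and has at least one edge, and its line graph is then connected.
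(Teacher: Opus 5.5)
Your proof is correct and follows the paper's argument: the same column-wise split into inactive edges, where the column is exactly $P^{\downarrow}_{(\cdot,e)}$, and active edges, where $(P_{\mathrm{eff}})_{e'e}\ge q\,P^{\downarrow}_{e'e}>0$ because the upper term is nonnegative. You also write out the line-graph path argument for irreducibility, which the paper leaves implicit, and you correctly cite the nonnegativity of $P^{\uparrow}$, whereas the paper's text cites $P^{\downarrow}$ at that step.
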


\begin{proof}
If $d_T(e)=0$, then
\begin{equation}
P_{\mathrm{eff}}(\cdot,e) = P^{\downarrow}(\cdot,e),
\end{equation}
and the claim is immediate.

If $d_T(e)>0$, then
\begin{equation}
P_{\mathrm{eff}}(\cdot,e)
=
q\,P^{\downarrow}(\cdot,e) + (1-q)\,P^{\uparrow}(\cdot,e).
\end{equation}
Since $q>0$ and $P^{\downarrow}(\cdot,e)\ge 0$ entrywise,
\begin{equation}
P^{\downarrow}_{e'e} > 0
\quad \Longrightarrow \quad
\bigl(P_{\mathrm{eff}}\bigr)_{e'e} \ge q\,P^{\downarrow}_{e'e} > 0.
\end{equation}
Thus every transition allowed by the lower walk is also allowed by the effective walk.
\end{proof}

\begin{cor}
Assume $q>0$ and that the line graph of the $1$-skeleton is connected. Then the effective walk is irreducible and aperiodic. Consequently, it admits a unique stationary distribution $\pi^{\mathrm{eff}}$, and for any initial condition $p(0)$ one has
\begin{equation}
\lim_{t\to\infty} p(t) = \pi^{\mathrm{eff}}.
\end{equation}
\end{cor}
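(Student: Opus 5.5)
The plan is to combine the preceding proposition with the self-loop observation from Section~\ref{sec:model}, and then invoke the standard convergence theorem for finite ergodic Markov chains.

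\emph{Irreducibility.} By the preceding proposition, $q>0$ guarantees that every transition $e\to e'$ with $P^{\downarrow}_{e'e}>0$ is also a transition of $P_{\mathrm{eff}}(q)$. I will check that $P^{\downarrow}_{e'e}>0$ for $e'\neq e$ exactly when $e$ and $e'$ share a vertex, i.e.\ when they are adjacent in the line graph of the $1$-skeleton. This follows from the explicit formula for $P^{\downarrow}_{e'e}$: when a common vertex exists, the sum contains the strictly positive term $1/[2d_V(v)]$. Now take any $e,e'$. Connectivity of the line graph gives a path $e=e_0,e_1,\dots,e_m=e'$ of successively adjacent edges. The product $\prod_k (P_{\mathrm{eff}})_{e_{k+1}e_k}$ is then positive, so $(P_{\mathrm{eff}}^m)_{e'e}>0$. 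This proves irreducibility.

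\emph{Aperiodicity.} For every $e$ we have $P^{\downarrow}_{ee}=\tfrac12\bigl(1/d_V(u)+1/d_V(v)\bigr)>0$, and the preceding proposition (applied with $e'=e$) gives $(P_{\mathrm{eff}})_{ee}\ge q\,P^{\downarrow}_{ee}>0$. Hence every state has period $1$. Together with irreducibility, the chain is aperiodic.

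\emph{Uniqueness and convergence.} I will invoke Perron--Frobenius for the finite, column-stochastic, irreducible and aperiodic (hence primitive) matrix $P_{\mathrm{eff}}$: some power $P_{\mathrm{eff}}^N$ is entrywise positive. It follows that the eigenvalue $1$ is simple and all other eigenvalues satisfy $|\lambda|<1$. The stationary vector guaranteed by the earlier proposition can therefore be normalized to a unique probability vector $\pi^{\mathrm{eff}}$ with positive entries. Finally, $P_{\mathrm{eff}}^t\to\pi^{\mathrm{eff}}\mathbf 1^\top$ as $t\to\infty$. Since $\mathbf 1^\top p(0)=1$, applying this to $p(t)=P_{\mathrm{eff}}^t p(0)$ gives $p(t)\to\pi^{\mathrm{eff}}$. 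The only point needing care is the column-vector convention: the stationary object is a right eigenvector and $\mathbf 1^\top$ is the left eigenvector. I expect no real obstacle, since the argument is a direct citation of standard finite Markov chain theory.
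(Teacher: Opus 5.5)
Your proposal is correct and follows the same route as the paper's proof. Irreducibility comes from the preceding proposition (the effective kernel contains the lower-walk support, i.e.\ the line graph), aperiodicity from the positive self-loops, and convergence from standard finite Markov chain theory; you simply spell out what the paper leaves implicit, namely the path argument in the line graph and the primitivity/Perron--Frobenius step giving $P_{\mathrm{eff}}^t\to\pi^{\mathrm{eff}}\mathbf 1^\top$.
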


\begin{proof}
Irreducibility follows from the previous proposition. Aperiodicity follows from the fact that
\begin{equation}
\bigl(P_{\mathrm{eff}}\bigr)_{ee} > 0
\qquad \forall e \in X_1,
\end{equation}
as established in Section~\ref{sec:model}. The convergence statement is then standard.
\end{proof}

\subsection{Remarks on symmetry and reversibility}

The lower walk is symmetric, hence reversible with respect to the uniform distribution. The upper walk, restricted to active edges, is reversible with respect to $\pi^{\uparrow}$. In general, these two invariant measures do not coincide, so their combination in $P_{\mathrm{eff}}$ need not be reversible.

Indeed, for active edges $e,e' \in X_1^+$ one has
\begin{equation}
P^{\uparrow}_{e'e}
=
\frac{N_2(e,e')}{3\,d_T(e)},
\end{equation}
where
\begin{equation}
N_2(e,e')
=
\#\{f \in X_2 : e \subset f,\ e' \subset f\}.
\end{equation}
Although $N_2(e,e') = N_2(e',e)$ is symmetric, the normalization by $d_T(e)$ makes $P^{\uparrow}$ generally non-symmetric unless the upper degrees are suitably balanced. Consequently, whenever $q<1$ and the upper degrees are non-uniform, $P_{\mathrm{eff}}$ is typically non-symmetric.

In highly regular situations, however, the effective walk may recover symmetry. This happens, for instance, in the complete $2$-dimensional simplicial complex on $n$ vertices, where all vertices have the same degree and all edges belong to the same number of triangles. In that case both transport channels are homogeneous, and the effective kernel becomes symmetric and therefore doubly stochastic.

\bibliographystyle{apsrev4-2}
\bibliography{references}

@book{brouwer2011spectra,
  title={Spectra of graphs},
  author={Brouwer, Andries E and Haemers, Willem H},
  year={2011},
  publisher={Springer Science \& Business Media}
}

@book{Cvetkovic2004,
  author    = {Drago{\v{s}} Cvetkovi{\'c} and Peter Rowlinson and Slobodan Simi{\'c}},
  title     = {Spectral Generalizations of Line Graphs: On Graphs with Least Eigenvalue -2},
  publisher = {Cambridge University Press},
  year      = {2004}
}

@article{Masuda2017,
  title={Random walks and diffusion on networks},
  author={Masuda, Naoki and Porter, Mason A and Lambiotte, Renaud},
  journal={Physics reports},
  volume={716},
  pages={1--58},
  year={2017},
  publisher={Elsevier}
}

@article{Schaub2020,
  title={Random walks on simplicial complexes and the normalized Hodge 1-Laplacian},
  author={Schaub, Michael T and Benson, Austin R and Horn, Paul and Lippner, Gabor and Jadbabaie, Ali},
  journal={SIAM Review},
  volume={62},
  number={2},
  pages={353--391},
  year={2020},
  publisher={SIAM}
}

@misc{FebbeArXiv2026b,
      title={Random Walks Across Dimensions: Exploring Simplicial Complexes}, 
      author={Diego Febbe and Duccio Fanelli and Timoteo Carletti},
      year={2026},
      eprint={2601.16086},
      archivePrefix={arXiv},
      primaryClass={cond-mat.stat-mech},
      url={https://arxiv.org/abs/2601.16086}, 
}

@article{MajhiJRoySocInterface2022,
    author = {Majhi, Soumen and Perc, Matjaž and Ghosh, Dibakar},
    title = {Dynamics on higher-order networks: a review},
    journal = {Journal of The Royal Society Interface},
    volume = {19},
    number = {188},
    pages = {20220043},
    year = {2022},
    month = {03},
    issn = {1742-5689},
    doi = {10.1098/rsif.2022.0043},
    url = {https://doi.org/10.1098/rsif.2022.0043},
    eprint = {https://royalsocietypublishing.org/rsif/article-pdf/doi/10.1098/rsif.2022.0043/926513/rsif.2022.0043.pdf},
}

@article{LambiotteNatPhys2019,
author={Lambiotte, Renaud
and Rosvall, Martin
and Scholtes, Ingo},
title={From networks to optimal higher-order models of complex systems},
journal={Nature Physics},
year={2019},
month={Apr},
day={01},
volume={15},
number={4},
pages={313-320},
issn={1745-2481},
doi={10.1038/s41567-019-0459-y},
url={https://doi.org/10.1038/s41567-019-0459-y}
}

@book{BianconiBook2021,
  author    = {G. Bianconi},
  title     = {Higher-Order Networks: An Introduction to Simplicial Complexes},
  publisher = {Cambridge University Press},
  address   = {Cambridge},
  year      = {2021},
  doi       = {10.1017/9781108770996}
}

@article{BensonScience2016,
author = {Austin R. Benson  and David F. Gleich  and Jure Leskovec },
title = {Higher-order organization of complex networks},
journal = {Science},
volume = {353},
number = {6295},
pages = {163-166},
year = {2016},
doi = {10.1126/science.aad9029},
URL = {https://www.science.org/doi/abs/10.1126/science.aad9029},
eprint = {https://www.science.org/doi/pdf/10.1126/science.aad9029}}

@article{BattistonNatPhys2021,
author={Battiston, Federico
and Amico, Enrico
and Barrat, Alain
and Bianconi, Ginestra
and Ferraz de Arruda, Guilherme
and Franceschiello, Benedetta
and Iacopini, Iacopo
and K{\'e}fi, Sonia
and Latora, Vito
and Moreno, Yamir
and Murray, Micah M.
and Peixoto, Tiago P.
and Vaccarino, Francesco
and Petri, Giovanni},
title={The physics of higher-order interactions in complex systems},
journal={Nature Physics},
year={2021},
month={Oct},
day={01},
volume={17},
number={10},
pages={1093-1098},
issn={1745-2481},
doi={10.1038/s41567-021-01371-4},
url={https://doi.org/10.1038/s41567-021-01371-4}
}

@article{MillanNatPhys2025,
author={Mill{\'a}n, Ana P.
and Sun, Hanlin
and Giambagli, Lorenzo
and Muolo, Riccardo
and Carletti, Timoteo
and Torres, Joaqu{\'i}n J.
and Radicchi, Filippo
and Kurths, J{\"u}rgen
and Bianconi, Ginestra},
title={Topology shapes dynamics of higher-order networks},
journal={Nature Physics},
year={2025},
month={Mar},
day={01},
volume={21},
number={3},
pages={353-361},
issn={1745-2481},
doi={10.1038/s41567-024-02757-w},
url={https://doi.org/10.1038/s41567-024-02757-w}
}

@article{SawadaPhysRevLett2024,
  title = {Role of Topology in Relaxation of One-Dimensional Stochastic Processes},
  author = {Sawada, Taro and Sone, Kazuki and Hamazaki, Ryusuke and Ashida, Yuto and Sagawa, Takahiro},
  journal = {Phys. Rev. Lett.},
  volume = {132},
  issue = {4},
  pages = {046602},
  numpages = {6},
  year = {2024},
  month = {Jan},
  publisher = {American Physical Society},
  doi = {10.1103/PhysRevLett.132.046602},
  url = {https://link.aps.org/doi/10.1103/PhysRevLett.132.046602}
}

@article{RiascosJCompNetworks2021,
    author = {Riascos, A P and Mateos, José L},
    title = {Random walks on weighted networks: a survey of local and non-local dynamics},
    journal = {Journal of Complex Networks},
    volume = {9},
    number = {5},
    pages = {cnab032},
    year = {2021},
    month = {10},
    issn = {2051-1329},
    doi = {10.1093/comnet/cnab032},
    url = {https://doi.org/10.1093/comnet/cnab032},
    eprint = {https://academic.oup.com/comnet/article-pdf/9/5/cnab032/40545326/cnab032.pdf},
}

@article{CarlettiPhysRevE2020,
  title = {Random walks on hypergraphs},
  author = {Carletti, Timoteo and Battiston, Federico and Cencetti, Giulia and Fanelli, Duccio},
  journal = {Phys. Rev. E},
  volume = {101},
  issue = {2},
  pages = {022308},
  numpages = {19},
  year = {2020},
  month = {Feb},
  publisher = {American Physical Society},
  doi = {10.1103/PhysRevE.101.022308},
  url = {https://link.aps.org/doi/10.1103/PhysRevE.101.022308}
}

@article{TraversaPhysRevE2023,
  title = {From unbiased to maximal-entropy random walks on hypergraphs},
  author = {Traversa, Pietro and de Arruda, Guilherme Ferraz and Moreno, Yamir},
  journal = {Phys. Rev. E},
  volume = {109},
  issue = {5},
  pages = {054309},
  numpages = {15},
  year = {2024},
  month = {May},
  publisher = {American Physical Society},
  doi = {10.1103/PhysRevE.109.054309},
  url = {https://link.aps.org/doi/10.1103/PhysRevE.109.054309}
}

@misc{FebbeArXiv2026a,
      title={Optimal Navigation on Simplicial Complexes}, 
      author={Diego Febbe and Duccio Fanelli and Gianluca Peri and Timoteo Carletti},
      year={2026},
      eprint={2607.29450},
      archivePrefix={arXiv},
      primaryClass={cond-mat.stat-mech},
      url={https://arxiv.org/abs/2607.29450}, 
}

@article{BensonSIAM2017,
  author  = {Benson, Austin R. and Gleich, David F. and Lim, Lek-Heng},
  title   = {The Spacey Random Walk: A Stochastic Process for Higher-Order Data},
  journal = {SIAM Review},
  volume  = {59},
  number  = {2},
  pages   = {321--345},
  year    = {2017},
  doi     = {10.1137/16M1074023}
}

@inproceedings{HayashiBook2020,
  author    = {Hayashi, Koby and Aksoy, Sinan G. and Park, Cheong Hee and Park, Haesun},
  title     = {Hypergraph Random Walks, {L}aplacians, and Clustering},
  booktitle = {Proceedings of the 29th ACM International Conference on Information \& Knowledge Management (CIKM)},
  pages     = {495--504},
  year      = {2020},
  doi       = {10.1145/3340531.3412034}
}

@article{WangPhyRep2024,
  author  = {Wang, Wei and Nie, Yanyi and Li, Wenyao and Lin, Tao and Shang, Ming-Sheng and Su, Song and Tang, Yong and Zhang, Yi-Cheng and Sun, Gui-Quan},
  title   = {Epidemic spreading on higher-order networks},
  journal = {Physics Reports},
  volume  = {1056},
  pages   = {1--70},
  year    = {2024},
  doi     = {10.1016/j.physrep.2024.01.003}
}

@article{IacopiniNature2019,
  author  = {Iacopini, Iacopo and Petri, Giovanni and Barrat, Alain and Latora, Vito},
  title   = {Simplicial models of social contagion},
  journal = {Nature Communications},
  volume  = {10},
  pages   = {2485},
  year    = {2019},
  doi     = {10.1038/s41467-019-10431-6}
}

@article{ErdosRenyi1959,
  author  = {Erd{\H{o}}s, Paul and R{\'e}nyi, Alfr{\'e}d},
  title   = {On random graphs {I}},
  journal = {Publicationes Mathematicae Debrecen},
  volume  = {6},
  pages   = {290--297},
  year    = {1959}
}

@incollection{Kahle2014Survey,
  author    = {Kahle, Matthew},
  title     = {Topology of random simplicial complexes: a survey},
  booktitle = {Algebraic Topology: Applications and New Directions},
  series    = {Contemporary Mathematics},
  volume    = {620},
  pages     = {201--221},
  publisher = {American Mathematical Society},
  year      = {2014},
  doi       = {10.1090/conm/620/12365}
}

@article{LinialMeshulam2006,
  author  = {Linial, Nathan and Meshulam, Roy},
  title   = {Homological connectivity of random 2-complexes},
  journal = {Combinatorica},
  volume  = {26},
  number  = {4},
  pages   = {475--487},
  year    = {2006},
  doi     = {10.1007/s00493-006-0027-9}
}

@article{Wegner1980,
  author  = {Wegner, F.},
  title   = {Inverse participation ratio in $2+\epsilon$ dimensions},
  journal = {Zeitschrift f{\"u}r Physik B},
  volume  = {36},
  pages   = {209--214},
  year    = {1980},
  doi     = {10.1007/BF01325284}
}

@article{MartinPRE2014,
  author  = {Martin, Travis and Zhang, Xiao and Newman, M. E. J.},
  title   = {Localization and centrality in networks},
  journal = {Physical Review E},
  volume  = {90},
  pages   = {052808},
  year    = {2014},
  doi     = {10.1103/PhysRevE.90.052808}
}

@article{MartinezMartinez2019Entropy,
  author  = {Mart{\'i}nez-Mart{\'i}nez, C. T. and M{\'e}ndez-Berm{\'u}dez, J. A.},
  title   = {Information Entropy of Tight-Binding Random Networks with Losses and Gain: Scaling and Universality},
  journal = {Entropy},
  volume  = {21},
  number  = {1},
  pages   = {86},
  year    = {2019},
  doi     = {10.3390/e21010086}
}

@article{MendezBermudez2015,
  author  = {M{\'e}ndez-Berm{\'u}dez, J. A. and Alcazar-L{\'o}pez, A. and Mart{\'i}nez-Mendoza, A. J. and Rodrigues, Francisco A. and Peron, Thomas K. DM.},
  title   = {Universality in the spectral and eigenfunction properties of random networks},
  journal = {Physical Review E},
  volume  = {91},
  pages   = {032122},
  year    = {2015},
  doi     = {10.1103/PhysRevE.91.032122}
}

@article{Condamin2007Nature,
  author  = {Condamin, S. and B{\'e}nichou, O. and Tejedor, V. and Voituriez, R. and Klafter, J.},
  title   = {First-passage times in complex scale-invariant media},
  journal = {Nature},
  volume  = {450},
  number  = {7166},
  pages   = {77--80},
  year    = {2007},
  doi     = {10.1038/nature06201}
}

@article{NohRieger2004,
  author  = {Noh, Jae Dong and Rieger, Heiko},
  title   = {Random Walks on Complex Networks},
  journal = {Physical Review Letters},
  volume  = {92},
  number  = {11},
  pages   = {118701},
  year    = {2004},
  doi     = {10.1103/PhysRevLett.92.118701}
}

@article{Tejedor2009GMFPT,
  author  = {Tejedor, V. and B{\'e}nichou, O. and Voituriez, R.},
  title   = {Global mean first-passage times of random walks on complex networks},
  journal = {Physical Review E},
  volume  = {80},
  pages   = {065104(R)},
  year    = {2009},
  doi     = {10.1103/PhysRevE.80.065104}
}

@book{Redner2001,
  author    = {Redner, Sidney},
  title     = {A Guide to First-Passage Processes},
  publisher = {Cambridge University Press},
  address   = {Cambridge, UK},
  year      = {2001}
}
\end{document}